%% file: main.tex
\documentclass[runningheads]{llncs}

\usepackage{graphicx}
\usepackage{amsmath,amssymb,amsfonts}
\usepackage{algorithmic}
\usepackage{textcomp}
\usepackage{xcolor}
\usepackage{url}
\usepackage{booktabs}
\usepackage{array}
\usepackage{listings}
\usepackage{subcaption}

\usepackage{stmaryrd} 
\usepackage{mathtools} 
\usepackage{tikz} 
\usepackage{enumerate}
\usepackage{algorithm}
\usepackage{todonotes}
\usepackage{multirow}

\newcommand{\acignore}[1]{{}}

\newcommand{\code}[1]{\texttt{#1}}

\newcommand{\EnumType}{\mathit{Enum}}
\newcommand{\EffectStmt}{\mathcal{E}}
\newcommand{\SetExpr}{E}
\newcommand{\SetVar}{s}

\newif\ifextended
\extendedtrue

\begin{document}

\title{Verification of Configurable SRA Systems\ifextended: Extended Version\fi}

\author{Alessandro Cimatti\inst{1} \and
Alberto Griggio\inst{1} \and
Christian Lidström\inst{1} \and
\\ Gianluca Redondi\inst{1} \and
Dylan Trenti
\inst{1,2}
}
\authorrunning{Cimatti et al.}
%
\institute{Fondazione Bruno Kessler, Trento, Italy \\
\email{\{cimatti,griggio,clidstrom,gredondi,dtrenti\}@fbk.eu}
\and 
University of Udine, Italy 
\email{trenti.dylan@spes.uniud.it}
}
\maketitle

\begin{abstract}
Many digital systems are designed as collections of asynchronous processes
orchestrated by a domain-specific scheduler. 
The verification of such \emph{scheduler-restricted asynchronous} systems (SRA)
is challenging due to process-process and process-scheduler interactions.
In this paper, we tackle the problem of verifying \emph{configurable} SRA.
A configurable SRA describes an unbounded family of possible SRA,
each resulting from an instantiation satisfying given configuration constraints;
our goal is proving at once that every legal instantiation of a configurable 
SRA is correct. 
We propose a contract-based, deductive verification approach that combines
(i) compositional proof rules that abstract the scheduler to prove top-level invariant properties,
(ii) automatic summarizations of the methods invoked by the scheduler,
(iii) simplification with respect to the nature of the space of configurations.
The approach is grounded in (object-oriented) first order logic,
requires reasoning over quantified statements, and leverages the 
Dafny software verifier as a backend. 
An experimental evaluation on industrial case studies demonstrates
that the framework scales effectively and enables practical 
reasoning about complex parameterized behaviors.
\acignore{
\keywords{Deductive verification \and Configurable systems \and Scheduler-restricted asynchrony \and Dafny \and Quantified Invariants}
}
\end{abstract}

\input{intro}
\input{relatedwork}

\input{logics}

\input{compositional-verification}

\input{implementation}

\input{experiments}

\input{conclusions}

\newpage
\bibliographystyle{splncs04}
\bibliography{references}


\clearpage

\ifextended
\input{appendix}

\input{appendix-experiments}

\fi

\end{document}

%% file: intro.tex
\section{Introduction}
\label{sec:introduction}


%
Many modern digital systems are designed as the combination of interacting processes
coordinated by a domain-specific scheduler to implement a cyclic execution pattern.
Within this design pattern, which we refer to as Scheduler-Restricted Asynchronous systems (SRA),
the scheduler orchestrates process execution in discrete phases, without preemptions,
so that the traditional interleaving model of concurrency is restricted to a
result in a structured duty cycle.
%
%
The SRA paradigm is adopted in various domains, including embedded control and systems on chip,
with the prominent example of SystemC~\cite{systemc-standard}, and has been the subject of significant research efforts in formal
verification~\cite{DBLP:journals/fac/CimattiGMRTT98,DBLP:conf/spin/CampanaCNR11,DBLP:conf/iscas/GrosseD03,systemc-tasche,DBLP:conf/models/HerberH14}.

%
In this work, we focus on \emph{configurable} SRA systems,
that are compact, parameterized descriptions of unbounded families of SRA.
A configurable SRA $\mathcal{S}$ defines a set of concrete SRA; 
for each configuration $\mathcal{C}$ satisfying given configuration constraints $\Gamma$, 
there is a corresponding SRA $\mathcal{S}[\mathcal{C}]$ resulting from a (legal)
instantiation of the parameters with the object and interconnections specified by $\mathcal{C}$.

%
Configurable design is widely adopted, for example in software product lines, firmware for heat pump control
or generic railways control procedures,
for its ability to factor out the commonalities of the possible products.
The W-development process~\cite{Wmodel08} is composed of a first phase,
where the general traits of the domain are designed and analyzed,
and several application-specific phases, where the characteristics of each product are chosen.

Our goal is to perform \emph{parameterized verification},
proving that a given generic, quantified specification $\varphi$ holds
for all possible instantiations, i.e. for all $\mathcal{C}$ such that $\mathcal{C} \models \Gamma$, 
$\mathcal{S}[\mathcal{C}] \models \varphi$. 
This contrasts with the simpler task of verifying each concrete SRA $\mathcal{S}[\mathcal{C}]$ separately,
where the processes and their interconnections are fixed. 
The advantage of parameterized verification is that the effort is carried out,
at the configurable SRA level, earlier in the development process, so that
the generic application can be certified once and for all. The verification
of each concrete SRA $\mathcal{S}[\mathcal{C}]$ is limited to checking that 
$\mathcal{C} \models \Gamma$.

%
We propose a deductive verification approach for the parameterized
verification of configurable SRA systems.
We adopt an object-oriented modeling language.
Classes represent process types
and are equipped with a state machine, whose transitions
are associated with guards and effects described as methods in an imperative-style language.
Distinguishing features include the ability to represent configurations,
to quantify over unbounded collections and domain-specific types,
and to define a generic scheduling policy and generic properties.

\acignore{The semantics is defined based on a given configuration,
and cleanly separates local instance execution from global coordination,
naturally capturing the cyclic scheduling patterns prevalent in embedded systems.}


%
\acignore{
Our approach to parameterized verification relies on the expressiveness
and flexibility of \emph{deductive} methods, in order to ensure 
the reproduction of compositional reasoning informally carried out by domain experts.
At the same time, the approach provides automation through
automatic generation of local contracts.}

Our main technical contributions are a contract-based, deductive approach that combines the following ingredients:
(i) in order to prove top-level invariant properties,
a set of compositional proof rules are used to abstract the details
of the scheduling policy and decompose global properties into per-class obligations;
(ii) an automated summarization of the methods invoked by the scheduler,
which supports an abstraction from the implementation details; and
(iii) model simplification with respect to the nature of the space of configurations,
in order to reduce the complexity of the verification conditions.

The verification framework was implemented on top of Dafny~\cite{dafny}, whose language
is very suitable to express the features of configurable SRA systems.
We automatically generate both the Dafny models as well as the 
contracts for the process implementations.
We then apply our compositional strategy 
to prove that a given specification $\varphi$ holds
for all the possible valid configurations
using the automatically-generated implementation contracts and 
a user-supplied (quantified) invariant,
by discharging a sequence of proof obligations using the Dafny prover.
\acignore{The encoding into Dafny of the constructs
of the high-level language was carefully designed in order to properly
activate the features of the underlying verification engines such as quantified reasoning.}

This paper generalizes our previous case-study works~\cite{cav2025,rssrail2025}
to a domain-independent framework for configurable SRA systems. In particular, 
we introduce a set-based formalization,
automatic generation of implementation contracts,
and a general compositional verification strategy.




We carried out an experimental evaluation over benchmarks from industrial railway control systems
and embedded system control. The analyzed models comprise tens of thousands of lines of code and
automatically generated 
contracts.
All of these were automatically proved by Dafny to hold on the corresponding implementations.
We also demonstrate the impact of the simplifications driven by the configuration constraints,
that were proved correct, and we established multiple system level safety properties
for \emph{all possible configurations}.
The results demonstrate that the approach is able to benefit from the expressiveness of
deductive verification without having to deal with manual annotations  sacrificing automation and scalability.

%
\paragraph{Outline.}
Section~\ref{subsec:related-work} reviews the related work. 
Section~\ref{sec:control-logics} introduces the syntax and semantics of configurable SRA systems.
Section~\ref{sec:compositional-verification} formalizes the problem and 
presents the compositional verification strategy.
Section~\ref{sec:implementation} describes the tool chain,
and Section~\ref{sec:experiments} presents the experimental evaluation.
Section~\ref{sec:conclusion} concludes with future work.
Additional material (example, contract-generation algorithm, and detailed
experimental data) is available in the appendix.

%% file: relatedwork.tex
\section{Related Work}
\label{subsec:related-work}
%
Software Product Lines (SPLs) study the design of families of products,
each defined by a selection of features~\cite{DBLP:books/daglib/p/BocklePL05}.
The SPL verification problem amounts to proving the correctness
of (every product in) the family~\cite{DeductiveSPL,DeductiveSPLIncremental,DBLP:conf/fmoods/GrulerLS08}.
Differently from our work, SPLs typically focus on models of software systems, 
not on processes orchestrated by a scheduler.
More importantly, feature models in SPLs are typically 
restricted to finite sets (and not unbounded families) of products.

%
Various works deal with the verification of
SystemC~\cite{systemc-tasche,DBLP:journals/tcad/CimattiNR13,DBLP:conf/spin/CampanaCNR11},
a well known design language for SRA systems.
Tasche  et al.~\cite{systemc-tasche} propose a deductive verification
framwork where VerCors~\cite{vercors} is used as a backend 
to verify SystemC programs.
The works in~\cite{DBLP:journals/tcad/CimattiNR13,DBLP:conf/spin/CampanaCNR11}
present fully algorithmic approaches based on explicit-state model checking 
and on the ESST extension with software model checking techniques~\cite{esst}.
The work in~\cite{DBLP:journals/fac/CimattiGMRTT98} discusses
the verification of railways interlockings as SRA systems
with a domain-specific scheduler.
All these works focus on verifying a single SRA,
composed by a finite number of statically arranged threads,
rather than an unbounded family of systems.

Also related are the deductive verification frameworks for
parameterized systems such as Ivy~\cite{ivy,MCMT,lambda,mypyvy}. 
Their models are directly expressed as logical formulae
with uninterpreted functions and quantifiers.
This is a crucial difference, as our framework allows the
users to write \emph{programs} in a control-logic language,
with native notions of execution cycle and synchronization primitives.
This design choice is important for industrial adoption, where
engineers are more comfortable writing code than quantified
first-order specifications.
We also automatically extract declarative
specifications in the form of contracts from the code, similarly to~\cite{rummer}. 
%

Cutoff results~\cite{cutoff} establish
that the verification of unbounded families can be reduced
to checking a finite number of instances.
These results typically are limited to fixed topologies with strong structural constraints.
In contrast, our framework can deal with
system topologies described by a set of configuration constraints,
that provide greater modeling flexibility.

Somewhat related is the IronFleet framework~\cite{ironfleet}, 
that leverages Dafny for mechanically verifying the safety and liveness of 
practical distributed system implementations.
IronFleet starts with a global abstract specification,
with a proof strategy based on refinement between state machines.
Our work deals with a collection of component programs,
and relies on compositional contract-based reasoning.
Finally, IronFleet's case studies are primarily distributed systems, 
whereas we focus on SRA systems.

Our previous papers~\cite{cav2025,rssrail2025} are direct predecessors of this work.
They focus on specific railway systems and toolchain instances,
while in this paper we present a general framework with
improved methodology.
Section~\ref{sec:experiments} quantifies the impact of these methodological changes.

%% file: logics.tex
\section{Configurable SRA Systems}
\label{sec:control-logics}
Informally, a configurable Scheduler-Restricted Asynchronous (SRA) system 
consists of a set of class
declarations together with a scheduler description 
that coordinates instances' executions. 
%
%
%
Classes serve as templates 
for generating processes, expressed as extended finite state machines (EFSMs).
Each class defines local behavior through transitions,
consisting of a guard (a boolean expression), an effect (a program statement),
and a specific \emph{scheduling phase} in which the transition can be executed.
The scheduler is also defined 
as an EFSM that coordinates the execution of class instances,
as depicted in Figure~\ref{fig:sched},
and whose locations correspond to scheduling phases.
The definitions of both the process classes and the scheduler 
are \emph{parameterized} in terms of (unbounded) sets of related objects.
An instantiation of a system is then given by a \emph{configuration},
specifying the concrete set of objects that are controlled by the scheduler 
and their mutual connections.

Scheduler transitions 
are divided into
self-loop transitions (from phase $p$ to itself), 
in which all process instances execute one of their transitions marked with $p$ (if any), 
and
phase change transitions, in which the internal state of the scheduler is updated, but no process transition is executed.
%
A sequence of scheduler transitions starting from 
a designated \emph{initial phase $l_0$}
and ending in 
a designated \emph{final phase $l_f$} then forms a \emph{scheduling cycle}.
A \emph{run} of the system consists of a sequence of scheduling cycles.

\subsection{Syntax}
\label{subsec:syntax}

\begin{figure}[t]
    \centering
    \begin{subfigure}[b]{0.35\textwidth}
        \centering
        \includegraphics[width=\linewidth,alt={Architectural view of SRA systems.}]{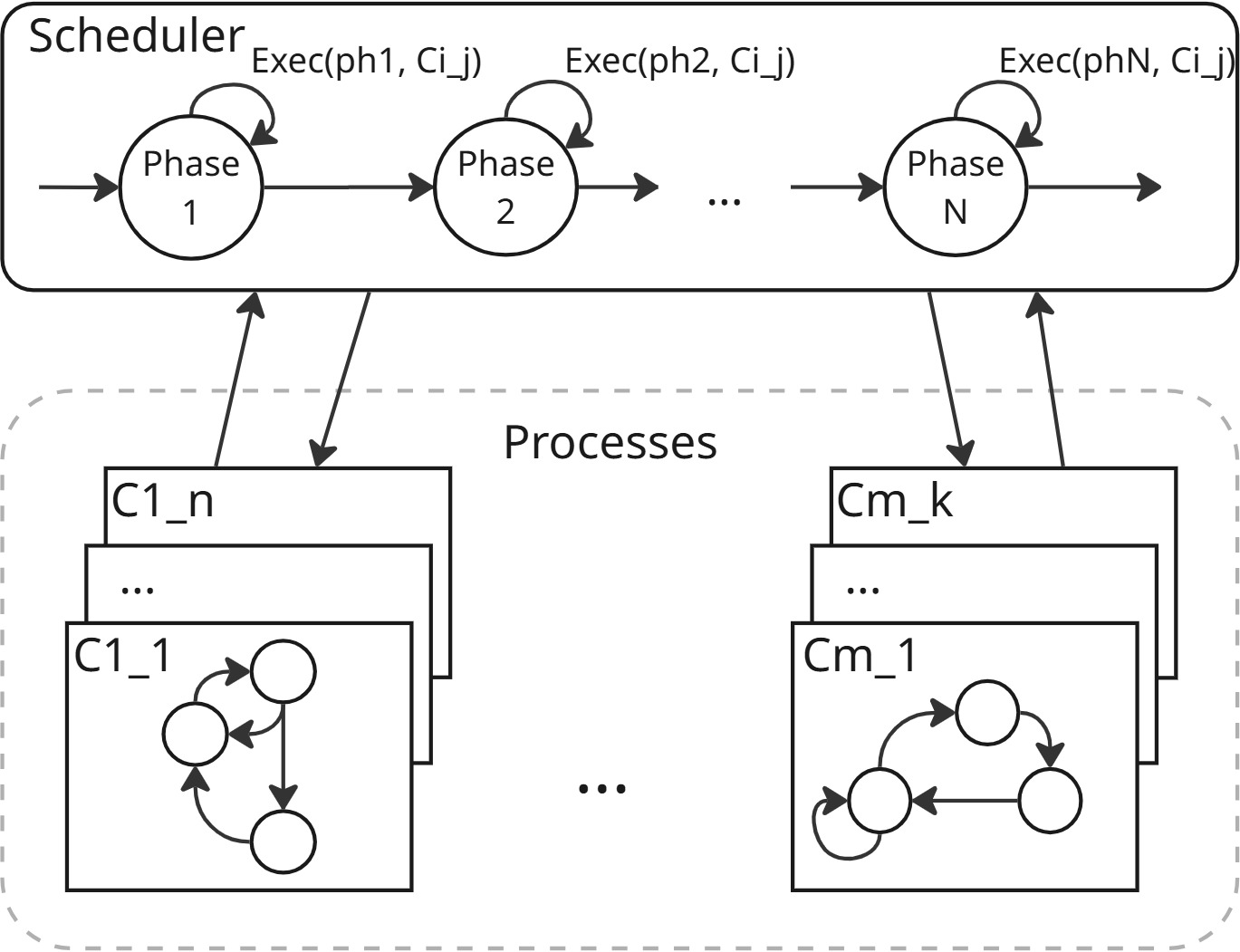}
    \end{subfigure}
    \hfill 
    \begin{subfigure}[b]{0.60\textwidth}
        \centering
        \includegraphics[width=\linewidth,alt={Dynamic view of SRA systems.}]{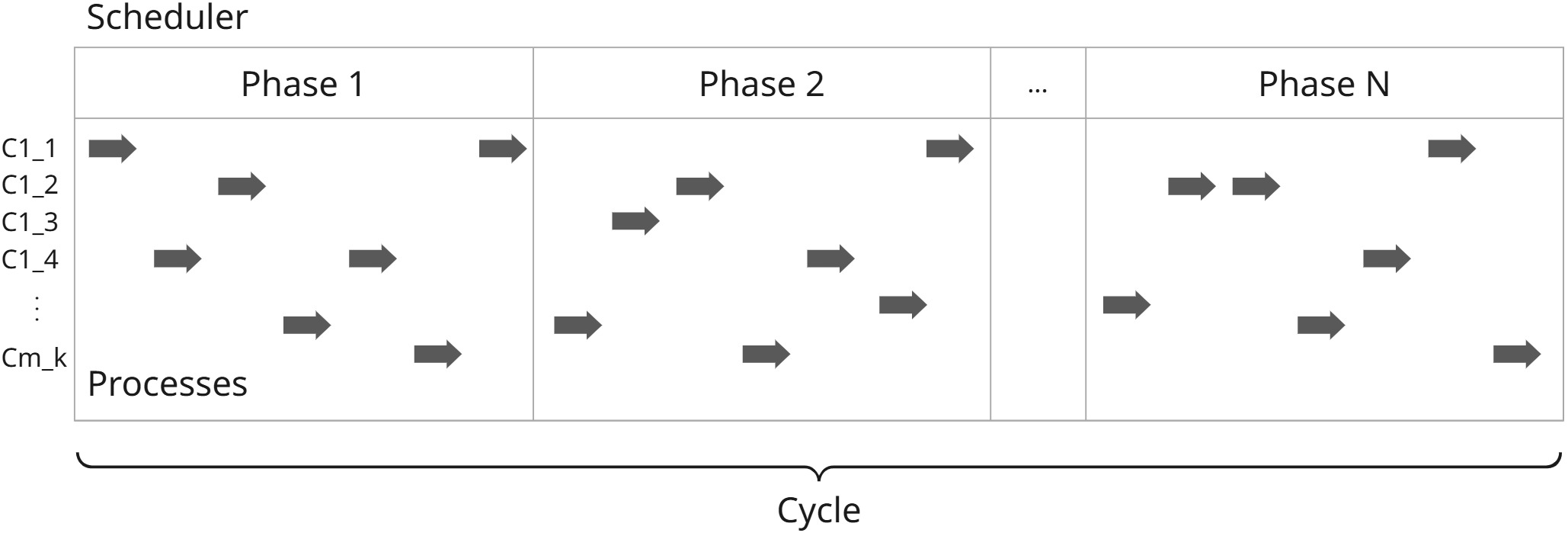}
    \end{subfigure}
    
    \caption{SRA: architectural view (left) and dynamic view (right).}
    \label{fig:sched}
\end{figure}


\begin{figure}[htb]
\small
\textbf{Types}\\[1ex]
\begin{minipage}[t]{0.5\linewidth}
$
\begin{array}{rcl}
\sigma &::=& \texttt{Int} \mid \texttt{Bool} \mid \EnumType \\
\tau &::=& \sigma \mid C \mid \texttt{Timer} \mid \texttt{Event} \mid \texttt{Set}\langle C\rangle
\end{array}
$
\end{minipage}\\[1em]
\begin{minipage}[t]{0.48\linewidth}
\textbf{Expressions}\\
\[
\begin{array}{rcl}
e &::=& c \mid x \mid x.f \mid e_1\pm e_2 \mid e_1*e_2\\
&\mid&  \textbf{if } b \textbf{ then } e_1 \textbf{ else } e_2 \mid |e| \\
b &::=& \texttt{true}\mid \texttt{false}\mid e_1=e_2\mid e_1<e_2 \\
&\mid& \lnot b\mid b_1\land b_2 \mid x\in e\mid e_1\subseteq e_2\mid e_1\;!!\;e_2\\
&\mid& e_1 \cup e_2 \mid (\forall x :: x\in e.\;b)\mid(\exists x :: x\in e.\;b)
\end{array}
\]
\end{minipage}\hfill
\begin{minipage}[t]{0.48\linewidth}
\textbf{Statements}\\
\[
\begin{array}{rcl}
S &::=& x := e; \mid x := *; \mid \textbf{if } b \textbf{ then } S \textbf{ else } S \\
  &\mid& \forall x:: x\in \SetExpr\ \{ x.f := e; \} \\
  &\mid& \texttt{assume } b; \mid \texttt{assert } b; \mid S;S \mid \varepsilon
\end{array}
\]
\end{minipage}
\caption{Types, expressions and statements.\label{fig:grammars}}
\end{figure}

\subsubsection{Type System}
\label{subsec:type-system}
The language provides basic types (\texttt{Int}, \texttt{Bool}), 
finite enumeration sorts, user-defined class types \(C\), 
and immutable set types \(\texttt{Set}\langle C\rangle\). 
Two additional special sorts model timing and events: \texttt{Timer} and \texttt{Event}; 
timers count the number of global execution cycles, while events are special 
boolean fields.
We also define a special enumeration type \texttt{PhaseEnum} that 
represents the scheduler's locations (phases), with distinguished 
initial and final values.

\subsubsection{Classes} 
\label{subsec:classes-structure}

A \emph{class} \(C\) is defined by a class declaration that 
specifies the structure and behavior of a finite state machine template. 
%
Each class declaration consists of variable declarations, 
parameter declarations and transition declarations. 
%
%
Variable declarations 
(\texttt{var} \(x : \sigma\)) introduce mutable scalar fields,
some of which can be tagged as \texttt{input}.
Moreover, each class has a special \texttt{location} variable of a finite enumeration type
(\texttt{LocationEnumC} for class type $C$) representing the locations of the EFSM.
Parameter declarations 
(\texttt{param} \(p : \sigma\)) introduce immutable constants, 
and set declarations (\texttt{set} \(s : \texttt{Set}\langle D \rangle\)) 
introduce immutable collections of objects, where \(D\) is any user-defined class type.
The \emph{class signature} $\Sigma_C$ of a class $C$ is the set of symbols
introduced by its declaration.
Each class also declares a finite set of \emph{transitions}.
A transition specifies a guard-effect pair that defines 
when and how state changes occur, and is given by 
a tuple \((l_{start}, G, l_{end}, \EffectStmt, p)\),
where \(l_{start}\) and \(l_{end}\) are class locations, 
\(G\) is a boolean expression (the guard), 
\(\EffectStmt\) is a statement (the effect), and \(p\) is a phase enumeration value 
indicating during which scheduler phase the transition is eligible for execution.
Every class contains three mandatory methods:
an \texttt{init} method for setting initial variable values,
an \texttt{exec} method for local execution (Section~\ref{subsec:local-semantics}),
and a \texttt{tick} method for updating timers.

\paragraph{Expression and statement language.}
Guards and effects in transitions are specified with expressions and statements in the language defined in Figure~\ref{fig:grammars}.
Basic expressions include constants (\(c\)), variables (\(x\)), field access (\(x.f\)),
and set cardinality (\(|e|\)). 
Boolean expressions extend standard logical operators with set membership (\(x \in e\)), 
subset relations (\(e_1 \subseteq e_2\)), disjointness assertions (\(e_1\;!!\;e_2\)),
and constrained quantification where \(\forall x :: x \in e .\; b\) and 
\(\exists x :: x \in e .\; b\) 
require \(e\) to have set type. We omit the description of the full type rules for brevity.
Quantified assignments \(\forall x :: x \in \SetExpr \; \{ x.f := e; \}\) are restricted 
to simple field assignments: \(\SetExpr\) must be of set type,
\(f\) must be a mutable scalar field of class \(C\), 
\(e\) is an expression that may reference \(x\) 
(i.e., no nested quantified assignments, method invocations, or other statements are allowed).

The language enforces several restrictions. 
The distinguished variable \texttt{location} 
cannot be assigned in transition effects: location changes occur implicitly when 
transitions execute. \texttt{Event} variables can only be assigned to 
\texttt{true} in effects; they are reset to \texttt{false} automatically when consumed by transitions.
Moreover, events are restricted syntactically: 
a guard is a conjunction of an event-free expression and a list of event variables.
Variables declared as \texttt{input} are externally controlled and cannot be assigned in effects. 
\subsubsection{Scheduler Signature and Transitions}
\label{subsec:scheduler-signature}
The scheduler signature $\Sigma_{\mathit{Sched}}$ extends the union of all class signatures with additional symbols
of the form $\texttt{All}_{C_i}$ of type $\texttt{Set}\langle C_i \rangle$ interpreted as the set of
instances of $C_i$ in the configuration, a variable
$\texttt{phase}$ of type $\texttt{PhaseEnum}$ for the scheduler's
current location, and a boolean variable $\texttt{executed}$
for each class instance.

Scheduler transitions define the global execution steps of the system. 
Each transition is specified by a guard, a start location $p$, 
and a target location $p'$. The guard is a $\Sigma_{\mathit{Sched}}$-expression, referencing only events, timers, 
and the \texttt{executed} flags of instances. 

Configuration constraints describe structural restrictions on admissible
configurations (e.g., disjointness relations or cardinality bounds).
They are formulas over $\Sigma_{\mathit{Sched}}$ that mention only immutable
symbols (set-valued fields and constant parameters).

\begin{definition}[Configurable SRA System]
A configurable Scheduler-Restricted Asynchronous system
\(\mathcal{S} = (\{C_1,\ldots,C_k\}, \mathit{Sched}, \Gamma)\) consists of a finite set of class
declarations \(\{C_1,\ldots,C_k\}\), a scheduler description
\(\mathit{Sched}\), and a set of configuration constraints \(\Gamma\)
over the immutable symbols.
\end{definition}

\subsection{Operational Semantics}
\label{subsec:semantics}

\begin{definition}[Configuration]
A \emph{configuration} \(\mathcal{C}\) for a finite set of classes 
\(\{C_1, \ldots, C_k\}\)
consists of:
(i) a finite \emph{universe} \(\mathcal{U}_i = \{o_{i,1}, \ldots, o_{i,n_i}\}\) 
of object instances for each class \(C_i\) (domain for sort \(C_i\)); and
(ii) an interpretation for all set-valued fields of \(C_i\) 
(interpretation for set predicates) and for all parameter fields of \(C_i\)
(constant values) for each instance \(o \in \mathcal{U}_i\).
\end{definition}
A configuration is the equivalent of a first-order structure, 
providing universes for each class and interpretations for set-valued fields 
and parameter fields (the non-mutable part of the system).

\subsubsection{Local Semantics}
\label{subsec:local-semantics}
The local semantics define how individual class instances behave
within a given configuration.
The local execution follows a \emph{small-step} semantics: 
each invocation of the \texttt{exec} method on a single instance 
constitutes one atomic step. 
The \texttt{exec} method takes a phase parameter that determines 
which category of transitions to consider, 
implementing the local execution through the following algorithm:
\begin{enumerate}
\item Filter all class transitions to include only those matching the 
specified phase;
\item Select the first transition whose guard is enabled in the current local state, 
according to the order in which transitions were declared in the class;
\item If an enabled transition is found: execute its effect statement sequence, 
update the control location to the target location, 
and reset to \texttt{false} every event variable that the 
occured in the transition's guard;
\item If no enabled transition exists: perform a \emph{stutter step} where the local state remains unchanged.
\end{enumerate}
\subsubsection{Global Execution Semantics}
\label{subsec:global-execution}

The global execution model coordinates class instances through the \emph{scheduler},
following a \emph{big-step} semantics.

\begin{definition}[Global State]
Given a configuration $\mathcal{C}$ with universe 
$\mathcal{U}$, 
a \emph{global state} $S$ is a function that assigns:
(i) for each instance $o \in \mathcal{U}$, 
    an interpretation for all scalar fields and set-valued fields of $o$; 
and (ii)
a location $S(\texttt{phase}) : \texttt{PhaseEnum}$, and 
a boolean value $S(\texttt{executed})$ for each instance.
\end{definition}

\noindent
%
The \emph{initial global state of the system} is established 
by initializing all instances and setting the scheduler to its initial location:
each instance $o$'s $\texttt{init}$ method    
sets its scalar fields to their declared initial values, 
timers to $\texttt{inactive}$, events to $\texttt{false}$, and $\texttt{executed}$ to $\texttt{false}$.

\paragraph{Global Execution.}
The effect of a scheduler transition from phase $p$ to $p'$ 
depends on its type:
\begin{itemize}
    \item \textbf{Self-loop transitions} ($p = p'$): The scheduler 
    invokes the \texttt{exec} method of each instance in an arbitrary order, 
    passing $p$ as the argument, and sets the $\texttt{executed}$ flag of each instance to $\texttt{true}$.   
    Note that local execution of an instance may read and modify both its own fields 
    and fields of other instances; therefore, the order in which instances are 
    executed can affect the resulting post-state.
    \item \textbf{Non-self-loop transitions} ($p \neq p'$): 
    The scheduler updates $\texttt{phase}$ to $p'$ and
    resets $\texttt{executed}$ to $\texttt{false}$ for all instances.
    If the target is the final location, 
    the scheduler additionally invokes each instance's \texttt{tick()} method to update timers.
\end{itemize}
\noindent
We denote a scheduler transition from global state $S$ to $S'$ as  
$S \xrightarrow{p \to p'} S'$. A visual representation of the global 
execution is given in Figure~\ref{fig:sched}, with the scheduler 
top coordinating the local execution of class instances within a cycle. 

\begin{definition}[Scheduling Cycle]
A \emph{scheduling cycle} is a sequence of global states 
$S_0, S_1, \ldots, S_n$ such that:
\begin{itemize}
    \item $S_0(\texttt{phase}) = l_0$ (the cycle starts at the initial location)
    \item For each $i < n$, we have 
    $S_i \xrightarrow{p_i \to p_{i+1}} S_{i+1}$
    for some scheduler transition
    \item $S_n(\texttt{phase}) = l_f$ (the cycle ends at the final location)
\end{itemize}
\end{definition}
After completing a cycle (reaching $l_f$), 
the scheduler transitions back to $l_0$ (setting $\texttt{phase}$ accordingly) to begin the next cycle;
in such transitions,
input variables of all instances
can be externally updated, while all other fields remain unchanged. 
\noindent
A \emph{run} of the system consists of a sequence of scheduling cycles,
starting with the initial global state.
An example of an SRA system is provided in Appendix~\ref{sec:appendix-example}.

%% file: compositional-verification.tex
\section{Compositional Verification of Configurable SRA}
\label{sec:compositional-verification}
We start by formalizing the verification problem 
for configurable SRA systems and then present a compositional verification strategy to solve it.
A single configurable SRA system describes an
unbounded family of possible SRA, each 
resulting from an instantiation satisfying its configuration constraints.
Our goal is proving at once that for every legal instantiation, 
at the end of every scheduling cycle, a given global property holds. 

\subsection{Verification Problem}
\emph{Global properties} $\varphi$ are safety assertions over the 
global state expressed as boolean $\Sigma_{\mathit{Sched}}$-expressions. 
They may refer to both immutable and mutable instance fields to capture invariants such as 
mutual exclusion or consistency requirements.


\begin{definition}[Parameterized Verification Problem]
Given a configurable SRA system $\mathcal{S}$ (with constraints $\Gamma$) and a global property $\varphi$, 
the \emph{parameterized safety verification problem} is to prove that, 
for all configurations $\mathcal{C}$ that satisfy all elements in $\Gamma$,
for each state $S$ in a run of the system such 
that $S(\texttt{phase}) = l_f$, the property holds:
$\mathcal{C}, S \models \varphi$.
\end{definition}

In this paper we focus on properties checked at the end of scheduling cycles
(i.e., when $\texttt{phase}=l_f$), but the same compositional obligations can be
adapted to properties required at other phases or over all reachable states.



\subsection{Compositional Verification}
\label{subsec:compositional-verification}
Our deductive verification strategy decomposes global reasoning
about the scheduler into a collection of local method contracts,
combined through first-order entailment checks. Rather than
modeling the scheduler explicitly, we reason compositionally: each
class method is verified in isolation, while global correctness is
obtained by proving that these local effects preserve a global
invariant $\mathit{Inv}$.

\subsubsection{Local Contracts}
\label{subsubsec:local-contracts}

A \emph{local contract} for a method $m$ of class $C$ is a Hoare
triple $\{\mathit{Pre}\}\; m \;\{\mathit{Post}\}$ where $\mathit{Pre}$
and $\mathit{Post}$ are $\Sigma_C$-expressions. 
Moreover, $\mathit{Post}$ may reference pre-state values 
via the \texttt{old} operator: for any expression~$e$,
$\texttt{old}(e)$ denotes the value of~$e$ immediately before the
method executes.
The contract is \emph{valid} if, for every configuration
$\mathcal{C}$ satisfying $\Gamma$ and every instance $o$ of $C$,
whenever $o$'s state satisfies $\mathit{Pre}$ before executing $m$,
it satisfies $\mathit{Post}$ afterwards.

In the following, we will focus on specific contracts that are used 
in our compositional argument. In particular, for each class $C$ and for 
each phase $p$ of the scheduler, we search for $\Sigma_C$-expressions $I_C$, $T_C(p)$, and $K_C$ such
that the following contracts are valid:
\[
\{\texttt{true}\}\ \texttt{init}()\ \{I_C\}
\qquad
\{\texttt{true}\}\ \texttt{exec}(p)\ \{T_C(p)\}
\qquad
\{\texttt{true}\}\ \texttt{tick}()\ \{K_C\}
\]
These contracts can be generated automatically from class definitions and verified independently
using standard program verification techniques.

\subsubsection{Global Entailment Checks}
\label{subsubsec:global-checks}

Using the local contracts above, we formulate a set of entailment
checks showing that, for every configuration satisfying~$\Gamma$, a
global invariant $\mathit{Inv}$ is preserved by every scheduler
transition and implies~$\varphi$ at the end of each cycle.
Sub-expressions wrapped
in $\texttt{old}(\cdot)$ are evaluated in the pre-state (before the
transition); all checks can be compiled to first-order validity queries
dischargeable by SMT solvers.
We write $\texttt{All}$ for the union of all $\texttt{All}_{C_i}$.

\paragraph{Self-loop execution preserves $\mathit{Inv}$.}
For each self-loop scheduler transition at phase~$p$, the scheduler invokes
\texttt{exec}($p$) on every instance in an \emph{arbitrary} order.
To show that $\mathit{Inv}$ is preserved by the whole interleaving,
we reduce the problem to a \emph{single-instance} check: it suffices
to show that executing any one instance preserves
$\mathit{Inv}$, then an induction on the number of executed instances
results in preservation for the full interleaving.

For a single-instance step we need to know what holds for an
instance~$c$ just before $c$~executes.
The global scheduler guard $G_{p\to p}$ holds at the start of the
self-loop, but it may be invalidated by earlier executions of other
instances.
We therefore introduce a \emph{local condition}
$g'_p$---a per-instance $\Sigma_C$-expression that approximates the
relevant information from the global guard and remains valid
throughout the interleaving.
Concretely, $g'_p$ must satisfy two side-conditions:
\begin{enumerate}
  \item[\textbf{(a)}] \emph{Establishment:}
    $\;\Gamma \models
        \big(\mathit{Inv} \land G_{p\to p}\big)
        \Rightarrow c.g'_p.$
  \item[\textbf{(b)}] \emph{Stability} (for every class $D$ and $d \neq c$ of type~$D$):
    $\;\Gamma \models
        \big(\texttt{old}(\mathit{Inv} \land c.g'_p)
          \land d.T_D(p)
          \land d.\texttt{executed}\big)
        \Rightarrow c.g'_p.$
\end{enumerate}
Given a valid $g'_p$, the main preservation check states that
executing a single instance~$c$ of class~$C$ under $g'_p$ preserves
$\mathit{Inv}$:
\[
\Gamma \models
\Big(
  \texttt{old}\big(\mathit{Inv}
    \;\land\; \texttt{phase} = p
    \;\land\; c.g'_p\big)
  \;\land\; c.T_C(p)
  \;\land\; c.\texttt{executed}
\Big)
\;\Rightarrow\; \mathit{Inv}
\]
\noindent
In practice, a common choice for $g'_p$ is
$\lnot c.\texttt{executed}$ (instance~$c$ has not yet executed in
phase~$p$), which is trivially stable under other instances'
executions.


We then check additional simpler preservation conditions for the 
initialization, for phase transitions, as well as the implication of $\varphi$ at the end of each cycle:

\paragraph{Initialization establishes $\mathit{Inv}$.}
%
\[
\Gamma \models
\Big(\forall c \in \texttt{All}.\;
  \big(c.I_C \;\land\; \lnot c.\texttt{executed}\big)
\;\land\; \texttt{phase} = l_0
\Big)
\;\Rightarrow\; \mathit{Inv}
\]

\paragraph{Phase transitions preserve $\mathit{Inv}$.}
For each scheduler edge $p \to p'$ with $p \neq p'$ and
guard~$g_{p \to p'}$:

\begin{itemize}
\item \emph{Non-final transition ($p' \neq l_f$).}
  The scheduler updates its control location and resets all
  \texttt{executed} flags:
  \begin{align*}
    \Gamma \models
    \big(\texttt{old}(\mathit{Inv} \;\land\; \texttt{phase} = p
      \;\land\; g_{p\to p'})
      \;\land\; \texttt{phase} = p'
      \;\land\; \\ \forall c \in \texttt{All}.\;
        \lnot c.\texttt{executed}\big)
    \;\Rightarrow\; \mathit{Inv}
  \end{align*}

\item \emph{Final transition ($p' = l_f$).}
  The scheduler additionally invokes
  \texttt{tick}() on each instance:
  \[
    \Gamma \models
    \left(
    \begin{aligned}
      &\texttt{old}(\mathit{Inv} \;\land\; \texttt{phase} = p
        \;\land\; g_{p\to l_f})
      \;\land\; \texttt{phase} = l_f \\
      &\land\; \forall c \in \texttt{All}.\;
        \big(\lnot c.\texttt{executed}
      \;\land\; c.K_C\big)
    \end{aligned}
    \right)
    \;\Rightarrow\; \mathit{Inv}
  \]

\item \emph{Reset ($l_f \to l_0$).}
  External inputs may change while non-input fields are preserved.
  Let $\mathrm{InputChange} \equiv
    \bigwedge_{\text{non-input } f}
    \forall c \in \texttt{All}.\; c.f = \texttt{old}(c.f)$.
  \[
    \Gamma \models
    \big(\texttt{old}(\mathit{Inv} \;\land\; \texttt{phase} = l_f)
      \;\land\; \mathrm{InputChange}
      \;\land\; \texttt{phase} = l_0\big)
    \;\Rightarrow\; \mathit{Inv}
  \]
\end{itemize}

\paragraph{$\mathit{Inv}$ implies the safety property.}
At the end of every cycle the invariant must imply the target property:
\[
  \Gamma \models
  \big(\mathit{Inv} \;\land\; \texttt{phase} = l_f\big)
  \;\Rightarrow\; \varphi
\]

\begin{theorem}[Compositional Soundness]
If all local contracts are valid and all global entailment checks
succeed for an invariant $\mathit{Inv}$, then $\varphi$ holds at the
end of every scheduling cycle in every run of~$\mathcal{S}$, for all
configurations satisfying~$\Gamma$.
\end{theorem}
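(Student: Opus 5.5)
The plan is to fix an arbitrary configuration $\mathcal{C}\models\Gamma$ and prove, by induction along a run, that every global state $S$ reached at a scheduler-transition boundary satisfies $\mathcal{C},S\models \mathit{Inv}$. The statement then follows from the last entailment check: any state with $S(\texttt{phase})=l_f$ satisfies $\mathit{Inv}\land \texttt{phase}=l_f$, hence $\varphi$. Throughout, the local contracts are used only through their soundness. If $o$ executes $m$, the post-state restricted to $o$'s signature satisfies the postcondition, with $\texttt{old}$ referring to the pre-state. This is exactly how the entailment checks were stated, so each semantic step can be matched against the corresponding first-order query instantiated in $\mathcal{C}$.

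\textbf{Base case.} The initial global state is produced by running $\texttt{init}$ on every instance, with $\texttt{executed}=\texttt{false}$ and $\texttt{phase}=l_0$. By validity of $\{\texttt{true}\}\,\texttt{init}()\,\{I_C\}$ every $c\in\texttt{All}$ satisfies $c.I_C$. Here I must argue that the init contracts still hold after all instances are initialized, i.e. that $\texttt{init}$ writes only the instance's own fields. If this is not explicit, I will take it as part of the intended semantics. The initialization check then yields $\mathit{Inv}$.

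\textbf{Inductive step, non-self-loop transitions.} These are direct. For $p\to p'$ with $p'\neq l_f$, the post-state differs from the pre-state only in $\texttt{phase}$ and the reset $\texttt{executed}$ flags, and the guard $g_{p\to p'}$ held in the pre-state. This instantiates the non-final check. For $p'=l_f$, the same argument applies, with $c.K_C$ obtained from the tick contract (again assuming ticks are local). For $l_f\to l_0$, only input fields change, which is exactly $\mathrm{InputChange}$.

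\textbf{Inductive step, self-loops.} This is the main obstacle. Fix an arbitrary order $c_1,\dots,c_n$ of all instances and intermediate states $S=U_0,U_1,\dots,U_n=S'$, where $U_j$ results from $c_j.\texttt{exec}(p)$ followed by setting $c_j.\texttt{executed}$. I will prove by inner induction on $j$ the strengthened statement: $U_j\models\mathit{Inv}\land\texttt{phase}=p$, and for every $k>j$, $U_j\models c_k.g'_p$.
\begin{itemize}
\item For $j=0$: $\mathit{Inv}$ holds by the outer hypothesis, and $G_{p\to p}$ holds since the transition was taken, so establishment (a) gives $c_k.g'_p$ for all $k$.
\item For $j\to j+1$: in $U_{j+1}$, $c_{j+1}.T_C(p)$ holds by the exec contract, and $c_{j+1}.\texttt{executed}$ holds, with pre-state $U_j$ satisfying $\mathit{Inv}\land\texttt{phase}=p\land c_{j+1}.g'_p$. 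The main preservation check gives $\mathit{Inv}$ in $U_{j+1}$.
\item Stability: for each $k>j+1$, condition (b) with $d=c_{j+1}\neq c_k$ gives $c_k.g'_p$ in $U_{j+1}$.
\end{itemize}

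The delicate points are the following.
\begin{itemize}
\item The exec contract, being a $\Sigma_C$-expression, must faithfully describe the whole post-state, including effects on other instances' fields. Otherwise the entailment would need a frame condition. I will rely on the paper's semantics of validity of contracts, where the unconstrained rest of the state is universally quantified in the first-order query.
\item The bookkeeping of $\texttt{executed}$ must match the order in which it is set relative to $\texttt{exec}$.
\end{itemize}
At $j=n$ we obtain $S'\models\mathit{Inv}$, completing the outer induction and hence the theorem.
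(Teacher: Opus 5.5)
Your proof is correct and follows the paper's argument essentially step for step. Both use an outer induction over scheduler transitions with the same case split (init, non-final, final with \texttt{tick}, reset), then derive $\varphi$ from the last check. Both also use an inner induction over the self-loop interleaving that maintains $\mathit{Inv}$ and $c.g'_p$ for not-yet-executed instances, via establishment (a), the main preservation check and stability (b).

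You make explicit some things the paper uses tacitly: tracking $\texttt{phase}=p$ through the inner induction, and assuming \texttt{init} and \texttt{tick} are local. As you conclude yourself, $T_C(p)$ needs no frame condition for soundness, because a succeeding check covers every post-state satisfying its antecedent.
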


\noindent
\emph{Proof sketch.}
The proof is by induction on scheduler transitions.
Initialization establishes $\mathit{Inv}$ by the initialization entailment check and valid
local \texttt{init} contracts; for preservation, we consider both phase-change scheduler
transitions and self-loop scheduler transitions, where the latter are further decomposed
into individual instance transitions (with an inner induction over the execution interleaving),
using the corresponding entailment checks together with valid local \texttt{exec} and \texttt{tick}
contracts; finally, at states with $\texttt{phase}=l_f$, $\varphi$ follows from
$\mathit{Inv}$ by the last entailment check.
The full proof is provided in Appendix~\ref{sec:proof-theorem}.

More generally, the problem of finding a suitable invariant can be 
reduced to the (undecidable) problem of safety verification of parameterized systems~\cite{parameterized-undecidability}.

%% file: implementation.tex
\section{Implementation in Dafny}
\label{sec:implementation}

The framework described in this paper has been implemented as a toolchain
built on top of the Dafny verification system~\cite{dafny}.
Dafny was chosen as it is a mature tool, which natively supports the combination 
of object-orientation, imperative statements, and quantified first-order logic, 
allowing a natural formalization of our models. 
The toolchain takes as input a description of a parameterized SRA system
written in a high-level control-logic language, and generates a SysML
description, executable C code, and a Dafny model for verification.
%

The implementation builds on an already existing toolchain.
In previous work~\cite{cav2025}, the framework focused on showing the correctness
of the generated executable C code, and as such the generated Dafny was made
to resemble this implementation. Components were connected through lists, 
and assignments were performed by iteration over the lists. 
The new implementation instead uses the set-theoretic formalization 
presented in Sec.~\ref{sec:control-logics}, and quantified assignments. 

Furthermore, safety properties were previously proven to be inductive for 
each  individual transition separately. Instead they are now proven over 
the \texttt{exec} method of each class, as described 
in Sec.~\ref{sec:compositional-verification}.


\subsection{Automatic Contract Generation}
\label{subsec:contract-generation}

Local contracts for methods can be generated automatically from the class definitions.
Thanks to the restrictions in our input language, this process is relatively straightforward. 
Due to lack of space, we cannot include the full details here,
but we report them in Appendix~\ref{sec:effect-transformation}.
Here, we only sketch the procedure for generating contracts for the \texttt{exec} method.
At a high level, 
this involves the following three steps:
(1) symbolic transformation of transition effects,
(2) encoding of individual transitions, and
(3) combination of the latter for the execution contract.
For each transition effect $\EffectStmt$, we compute a symbolic effect formula
$\text{Effect}_{\EffectStmt}$ that captures the state update performed by executing $\EffectStmt$.
This is done by forward symbolic execution, which computes a symbolic map $M_{\EffectStmt}$
associating each variable with an expression over pre-state values.
Assignments update the corresponding map entry; conditionals yield conditional
expressions; and quantified assignments generate lambda expressions for function
fields.
The effects of individual transitions are then combined together to obtain the 
contract for the \texttt{exec} method as a disjunction of the possible transitions, 
in which individual transition guards are augmented with constraints
enforcing the transition priority 
(a transition $t$ is to be executed only if there exists no other 
$t'$ declared before $t$ in the class definition whose guard evaluates to true).

\subsection{Optimization: Quantifier grounding}
\label{subsec:singleton-transformation}

If configuration constraints include constraints of the form $|\SetVar{}| = k$ for set-valued field $\SetVar{}$,
we can use this information to rewrite quantified expressions and statements into equivalent,
simpler forms.
Specifically, universal and existential quantifiers over $\SetVar{}$ can be expanded into conjunctions
or disjunctions over the $k$ elements of $\SetVar{}$. A similar encoding can be applied in case of 
bound constraints of the form $|\SetVar{}| \leq k$.
This transformation can be applied both at the level of code (rewriting quantified statements in method bodies)
and at the level of contracts (e.g., quantified postconditions),
resulting in quantifier-free verification conditions that are easier for SMT solvers to handle.
Furthermore, we can prove that these rewrites result in equivalent specifications, assuming
that the constraints hold. Thus, when proving satisfaction of the
quantifer grounded specifications, and the safety properties, 
the same holds for the original specifications.

We implement this optimization for constraints $k \leq 1$ in the Dafny encoding. 
We keep the original set-valued fields in the ghost state 
(i.e., state only accessible to specifications) 
and generate new  object-valued fields (which are possibly nullable for bounded constraints) 
as the grounded version. 
We generate both the original and grounded version
of every specification, 
over the respective fields.
Then, we add lemmas stating the equivalence of the two versions, with two types of assumptions:
(i) the sizes of the set-valued fields are according to the given constraints, and
(ii) the object of each set-valued field equals the respective object-valued field.
Such assumptions are added for every field-pair occuring in the specifications 
whose equivalence are being proven.
%
For example, for a quantifier grounded nullable 
field $\SetVar{}$ (i.e., now an object), we add also
the original field as ghost, $\SetVar{}_{\mathrm{ghost}}$. Then, for a grounded predicate $p$ referencing 
$\SetVar{}$ we generate also the original, non-grounded $p_{\mathrm{ghost}}$ referencing $\SetVar{}_{\mathrm{ghost}}$, 
and a lemma stating, where $|\SetVar{}_{\mathrm{ghost}}| \leq 1 \in \Gamma$:
$$\Gamma \land (|\SetVar{}_{\mathrm{ghost}}| = 0 \Rightarrow \SetVar{} = \mathrm{null})
 \land (|\SetVar{}_{\mathrm{ghost}}| = 1 \Rightarrow \{\SetVar{}\} = \SetVar{}_{\mathrm{ghost}}) \models p \iff p_{\mathrm{ghost}}.$$ 

%% file: experiments.tex
\section{Experimental Evaluation}
\label{sec:experiments}


\subsubsection*{Experimental Setup}

We evaluate the implementation of our framework on 
benchmarks
%
from previous work on SystemC verification~\cite{systemc-tasche}
and from industrial case studies obtained within a collaboration
with the Italian railway infrastructure operator RFI\footnote{Due
to the properietary nature of the applications, the case studies
cannot be publicly shared.}.
The verified applications were designed by the RFI signaling engineers
using the AIDA environment~\cite{DBLP:conf/isola/AmendolaBCCGSST20,DBLP:conf/sefm/CavadaCGS22},
while formal verification experts interacted with the toolchain described in this paper.
%
The cases used in this work have been presented in previous
application papers~\cite{cav2025,rssrail2025}, where verification
was performed with a preliminary version of the toolchain.
The present results were obtained with a consolidated and generalized version
of the framework, with the following differences.
First, collections of objects are represented as sets
(as well as lists); second, the toolchain implements a more general proof
strategy, that is not limited to the AIDA domain specific scheduling policy,
and that has been proved correct in the previous sections;
third, the quantifier grounding transformation provably results
in equivalent specifications (Section~\ref{subsec:singleton-transformation}).

We evaluate the toolchain on three different aspects: 
verification of \emph{local contracts} (i.e. all the methods
satisfy the automatically generated candidate contracts),
\emph{safety properties}, and \emph{quantifier grounding proofs}.
We report the number of assertions, and the performance measured in terms of 
time elapsed and of Dafny RC\footnote{The Dafny resource count
(RC) is a measure of the number of steps taken to complete the proof. RC is deterministic 
for a specific Dafny version.}.
The stated size of Dafny encodings includes the specifications.

We compare:
(i)~the current set-based formalization against the previous list-based formalization,
(ii)~the quantifier grounded encoding against the non-optimized one,
(iii)~the current proof strategy of proving safety properties over the 
\texttt{exec} method of each class, against the previous strategy
with proofs over individual transitions.
\acignore{The opaque block optimization employed in earlier work~\cite{cav2025} 
is not relevant in the set-based formalization and is not included here.}
%
%
%
All the experiments were performed on a cluster of identical nodes 
(AMD EPYC 7413 24-Core Processor, 96CPU, 1.0TB RAM), with
each verification task being assigned 8GB of RAM,
and using Dafny 4.11.0.

\subsubsection*{Experimental Results}
The verification results for local contracts of all systems are 
summarized in Table~\ref{tab:local-contracts}.
The \emph{total} time and resource count (RC) denotes 
verification of all methods against their contracts.
The \emph{execute} column is the verification of only the \texttt{exec}
method against its contract.
The \emph{guard} and \emph{effect} columns show only the verification of guards and
effects, respectively.
Note that the set formalization contain no guard methods.
%
The results for safety properties are summarized in Table~\ref{tab:props}.
The \emph{transition} columns are the results of verifying properties
over each of the transitions, whereas the \emph{execute} columns are
the results for verifying directly over the \texttt{exec} method
(which executes the individual transitions).
%
The detailed results are reported in Appendix~\ref{sec:appendix-experiments}.

\begin{table}[t]
    \caption{Verification statistics for local contracts}
    \centering
    \resizebox{0.85\textwidth}{!}{ 
    \begin{tabular}{c|c|c|rr|rr|rr|rr}
        \toprule
        Sys. &
        Version & \# Ass. & 
        Tot. time & Tot. RC & 
        Exec. time & Exec. RC & 
        Guard time & Guard RC & 
        Eff. time & Eff. RC \\
        \hline
    \multirow{1}{3em}{\centering RCS} &
    \multirow{1}{4em}{\centering Set}
     & 105 & 2s & 2.72M & 0s & 1.55M & 0s & 0.16M & 1s & 1.65M \\ 
        \hline
    \multirow{4}{3em}{\centering RPS} &
    \multirow{1}{4em}{\centering List} 
     & 994 & 4233s & 9436.56M & 350s & 724.29M & 3465s & 7034.48M & 396s & 1617.89M \\
    &\multirow{1}{4em}{\centering List+QG} 
     & 998 & 925s & 2529.53M & 403s & 671.75M & 60s & 188.41M & 442s & 1617.18M \\ 
    &\multirow{1}{4em}{\centering Set} 
     & 963 & 729s & 2154.06M & 345s & 675.18M & 7s & 21.37M & 356s & 1401.94M \\
    &\multirow{1}{4em}{\centering Set+QG}
     & 963 & 620s & 1851.96M & 234s & 388.14M & 5s & 13.7M & 362s & 1401.69M \\  
        \hline
    \multirow{4}{3em}{\centering SCL} &
    \multirow{1}{4em}{\centering List}
     & 408 & 1001s & 3814.77M & 97s & 346.38M & 1s & 2.4M & 895s & 3446.88M \\ 
    &\multirow{1}{4em}{\centering List+QG}
     & 411 & 924s & 3528.9M & 31s & 90.37M & 1s & 1.6M & 886s & 3418.88M \\ 
    &\multirow{1}{4em}{\centering Set}
     & 399 & 969s & 3763.57M & 82s & 317.41M & 1s & 1.91M & 880s & 3425.74M \\ 
    &\multirow{1}{4em}{\centering Set+QG}
     & 399 & 910s & 3518.73M & 22s & 80.44M & 1s & 1.5M & 880s & 3418.78M \\ 
        \bottomrule
    \end{tabular}
    }
    \label{tab:local-contracts}
\end{table}

\begin{table}[t]
    \caption{Verification statistics summary for safety properties}
    \centering
    \resizebox{
\textwidth}{!}{ 
    \begin{tabular}{c|c|rr|rr|rr|rr|rr|rr|rr|rr}
        \toprule
        \multirow{3}{2.2em}{\centering Sys.} & 
        \multirow{3}{4em}{\centering Prop.} & 
        \multicolumn{4}{c|}{\textbf{List}} & 
        \multicolumn{4}{c|}{\textbf{List+QG}} & 
        \multicolumn{4}{c|}{\textbf{Set}} & 
        \multicolumn{4}{c}{\textbf{Set+QG}} \\
        &&
        \multicolumn{2}{c}{Trans.} & \multicolumn{2}{c|}{Exec.} &
        \multicolumn{2}{c}{Trans.} & \multicolumn{2}{c|}{Exec.} &
        \multicolumn{2}{c}{Trans.} & \multicolumn{2}{c|}{Exec.} &
        \multicolumn{2}{c}{Trans.} & \multicolumn{2}{c}{Exec.} \\
        &&
        \multicolumn{1}{c}{Time} & \multicolumn{1}{c|}{RC} & 
        \multicolumn{1}{c}{Time} & \multicolumn{1}{c|}{RC} & 
        \multicolumn{1}{c}{Time} & \multicolumn{1}{c|}{RC} & 
        \multicolumn{1}{c}{Time} & \multicolumn{1}{c|}{RC} &
        \multicolumn{1}{c}{Time} & \multicolumn{1}{c|}{RC} & 
        \multicolumn{1}{c}{Time} & \multicolumn{1}{c|}{RC} & 
        \multicolumn{1}{c}{Time} & \multicolumn{1}{c|}{RC} & 
        \multicolumn{1}{c}{Time} & \multicolumn{1}{c}{RC} \\
        \hline
   \multirow{2}{2.2em}{RPS} & 
     Sum  &  16708s  &  42544M  &  755s  &  1899M  &  6195s  &  15508M  &  514s  &  1195M  &  12888s  &  33769M  &  797s  &  1908M  &  4096s  &  10064M  &  497s  &  1187M  \\
   & Max  &  2224s  &  5580M  &  116s  &  277M  &  370s  &  863M  &  72s  &  188M  &  1851s  &  4866M  &  92s  &  226M  &  238s  &  555M  &  59s  &  134M  \\
     \hline
     \multirow{1}{2.2em}{SCL} &
       1  &  13s  &  42M  & 11s  &  32M  &  12s  &  39M  & 13s  &  38M  &  12s  &  41M  & 12s  &  36M  &  11s  &  38M  & 13s  &  35M   \\
        \bottomrule
    \end{tabular}
    }
    \label{tab:props}
\end{table}

\subsubsection*{Generalized Robot Controller System}

The Robot Controller System (RCS) is a SystemC 
case study proposed in~\cite{systemc-tasche}, where a controller 
decides the movement based on the inputs of a \emph{fixed number of sensors}.
The property states that the controller never decides to move in a 
direction in which an obstacle is perceived. 
We generalize the RCS to include an unbounded number of sensors. 
The Dafny encoding is made up of about 800 LoC.
Verification of the safety property, which was not inductive and so was 
manually extended to form the invariant, took 18s and used 63.26M RC.
The full details of the example can be found in Appendix~\ref{sec:appendix-example}.

\subsubsection*{Railways Protection System (RPS)}
\label{sec:exp:pi}

The RPS~\cite{cav2025} is an industrial safety-critical system, developed by RFI, that ensures that workers
can safely access railway lines for maintenance.
We verify the control logic of the RPS, which is responsible for authorizing workers' requests to access
the line based on the state of the railway network.
The verification statistics are given 
for four different sets of generated codebases,
each between 11k and 14k LoC (with the list-based ones on the higher end).

The safety properties express the fact that authorization cannot be given 
in the presence of unsafe inputs.
We succesfully verify the same 21 safety properties 
as in~\cite{cav2025}, in addition to four new ones 
obtained from the railway engineers designing the system.
All but four of the properties were already inductive over the 
state machine transitions. 
The non-inductive properties were manually extended to form the invariant. 
The summary gives total time and resource count for all properties, as well as
the highest for a single property.

In the quantifier grounded version, verifying the 516 lemmas stating the equivalence
between quantifier grounded and non-optimized versions of specifications 
took 60s (with a total of 179M RC).

\subsubsection{Signal Control Logic (SCL)}
\label{sec:exp:segnale}

The SCL benchmark concerns the control logic for a railway signal system.
The SCL was originally described in~\cite{rssrail2025} to demonstrate the use of model
checking techniques implemented in nuXmv~\cite{nuXmv} for the generation of invariants
and counterexamples on manually generated finite abstractions of the SCL.
As such, no direct comparison with the techniques proposed here is possible.
We prove that at the end of each control cycle the signal aspects (colors) are set correctly.

Each generated code base is about 6k LoC.
%
In the quantifier grounded version, verification of the 210 equivalence lemmas
took 4s (with a total of 8M RC).






\subsubsection*{Discussion}

From the experimental results we can draw the following conclusions.
First, the approach is applicable in practice to prove safety properties
for all system configurations.
Second, the approach scales to realistic case studies.
The automated summarization of methods into candidate contracts
is effective in reducing the manual effort required for verification, and
requires reasonable computational effort.
%
Third, the set-theoretical formalization is superior to the list-based one in terms
of verification performance, hiding unnecessary details.
Similarly, the quantifier grounding optimization significantly reduces verification time 
and resource consumption in most cases, although the effect is more dramatic for
less optimized verification tasks (e.g., lists without grounding).
%
Finally, the verification of safety properties over the \texttt{exec} method
is significantly faster than over each transition. 
We conjecture that this is due to a better usage of the underlying
SMT solver in the \texttt{exec} version, which can reuse information across
multiple branches instead of restarting for each transition.
%
In general however, 
the careful control of the Dafny proof engine will require further investigation.

%% file: conclusions.tex
\section{Conclusion}
\label{sec:conclusion}

We presented a unified framework for the verification 
of configurable systems expressed as a collection of process types
orchestrated by a domain-specific scheduler. 
Each process type is described as an extended finite-state machine,
with actions attached to transitions expressed as imperative code.
First-order quantified expressions and statements
enable the direct logic-based representation of the space of possible configurations,
with variable number of processes and complex interconnections.
The verification approach is compositional and contract-based:
proving that a global safety property holds at the end of every scheduling cycle
is reduced to checking local contracts on each process type. The approach is implemented
on top of Dafny, and relies on the automatic generation of contracts
and annotations in order to reduce the manual effort required for verification.
The experimental evaluation, carried out on several real-world case studies,
demonstrated the generality, the effectiveness and the scalability of the approach.

%
There are several directions for future work.
First, we intend to further improve automation.
Currently, the decomposition of the global verification,
the extraction of contracts from methods implementations,
the application of configuration-specific optimizations 
and the compilation into Dafny all require no user intervention.
However, in general the user may have to provide
additional inductive invariants for the properties of interest.
While in our case studies the invariants were often straightforward extensions 
of the desired properties, this remains a manual step.
A promising direction is to leverage parameterized model checking techniques 
for automatic invariant discovery, e.g. \cite{lambda,duoai}, 
possibly extended with abstraction strategies similar to \cite{environment-abstraction,eagerabs}.
Our preliminary experiments in this direction show encouraging results,
and we plan to develop a fully automated pipeline in future work.
On a different direction, we intend to extend our approach to consider 
more general classes of properties, such as safety properties spanning
multiple scheduling cycles and liveness properties.
Finally, an important direction for future work will concern the application 
of formal or semi-formal techniques (e.g. equivalence checking, translation validation, or co-simulation)
to establish the functional equivalence between the models used for verification 
and the actual (automatically generated) C implementations deployed in production.


%% file: appendix.tex
\appendix

\section{Extended Example: Robot Controller System}
\label{sec:appendix-example}

We present an example of a configurable SRA system modeling a robot controller with multiple sensors,
inspired by a SystemC case study \cite{systemc-tasche}.
The system consists of a controller class and a sensor class, where the controller reads sensor states and decides on movement direction.
The scheduler has four phases: \texttt{Sense} (the intial phase), \texttt{Act}, \texttt{Reset}, and \texttt{End}.
\subsection{Class Definitions}
We start by defining the classes for the controller and sensors, along with their local transitions.
\textbf{Sensor Class:}
\begin{verbatim}
class Sensor {
  var location : SensLoc        // {Ready, Go, NoGo}
  var executed : Bool
  input obstacle : Bool         // external input
  event processed : Event       // set by controller, treated as a boolean
  
  transition senseGo = (Ready, !obstacle, Go, { }, Sense)
  transition senseNoGo = (Ready, obstacle, NoGo, { }, Sense)
  transition resetGo = (Go, processed, Ready, { }, Act)
  transition resetNoGo = (NoGo, processed, Ready, { }, Act)
}
\end{verbatim}

The \texttt{Sensor} class detects obstacles and transitions between control states:
\texttt{Ready} (waiting to sense), \texttt{Go} (no obstacle), or \texttt{NoGo} (obstacle detected).
It has an event \texttt{processed} set by the controller to acknowledge the sensor reading.
The initialize method (not shown) sets the location to \texttt{Ready}.

\textbf{Controller Class:}
\begin{verbatim}
class Controller {
  var location : CtrlLoc        // {Idle, Moving}
  var executed : Bool
  var direction : Direction     // {Forward, Stop, Left, Right}
  set leftSensors : Set<Sensor>
  set rightSensors : Set<Sensor>
  set allSensors : Set<Sensor> //{ leftSensors + rightSensors }
  
  transition actRight = (Idle, 
    (exists s in leftSensors : s.location == NoGo) && 
    (forall s in rightSensors : s.location == Go), 
    Moving, { direction := Right;
      forall s in allSensors { s.processed := true; } }, Act)
  transition actLeft = (Idle, 
    (exists s in rightSensors : s.location == NoGo) && 
    (forall s in leftSensors : s.location == Go), 
    Moving, { direction := Left;
      forall s in allSensors { s.processed := true; } }, Act)
  // similar transition for actForward and actStop, omitted for brevity
  transition reset = (Moving, true, Idle, { }, Reset)
  
}
\end{verbatim}

The \texttt{Controller} class owns sets of sensors and reads their state
via quantified expressions. It decides on movement direction based on sensor readings
and signals sensors when it has processed their state.
The initialize method (not shown) sets the location to \texttt{Idle} and direction to \texttt{Stop}.

\paragraph{Scheduler.}
The scheduler in this example has a set of locations
$\texttt{PhaseEnum} = \{\texttt{Sense}, \texttt{Act}, \texttt{Reset}, \texttt{End}\}$,
with initial phase $\texttt{Sense}$ and final phase $\texttt{End}$.
The scheduler transitions are defined as follows:
\begin{enumerate}
\item \emph{Self-loop on \texttt{Sense}} $(p = p' = \texttt{Sense})$, with guard
$G_1 \equiv \forall x \in \texttt{All}.\; \lnot x.\texttt{executed}$
This transition triggers the local execution of all instances in the \texttt{Sense} phase. 
Since at the start of each cycle $\texttt{executed} = \texttt{false}$ for all instances, 
the guard is satisfied.
After execution, each instance's $\texttt{executed}$ flag is set to $\texttt{true}$, 
disabling the guard.

\item \emph{Phase change from \texttt{Sense} to \texttt{Act}} $(p = \texttt{Sense},\; p' = \texttt{Act})$, with guard
 $ \lnot G_1$. This transition fires once all instances have executed in the \texttt{Sense} phase.

\item \emph{Self-loop on \texttt{Act}} $(p = p' = \texttt{Act})$, with guard:
$G_2 \equiv \forall x \in \texttt{All}.\; \big(\lnot x.\texttt{executed}\big) 
\;\lor\; (\exists s \in \texttt{All}_{\texttt{Sensor}}.\; s.\texttt{processed})$
This transition repeatedly invokes local execution in the \texttt{Act} phase. 
The guard ensures that the self-loop continues as long as either some instance has not yet executed in this phase 
or there remain sensors with a pending \texttt{processed} event, 
forcing the system to iterate until all events are consumed.

\item \emph{Phase change from \texttt{Act} to \texttt{Reset}} $(p = \texttt{Act},\; p' = \texttt{Reset})$, with guard $\lnot G_2$.
This transition fires once all instances have executed in the \texttt{Act} phase and no pending events remain.

\item \emph{Self-loop on \texttt{Reset}} $(p = p' = \texttt{Reset})$, with guard
$G_3 \equiv \forall x \in \texttt{All}.\; \lnot x.\texttt{executed}$
This transition triggers local execution for the \texttt{Reset} phase.

\item \emph{Phase change from \texttt{Reset} to \texttt{End}} $(p = \texttt{Reset},\; p' = \texttt{End})$, with guard $\lnot G_3$

\end{enumerate}
A scheduling cycle thus consists of the transition sequence
$S_0 \xrightarrow{\texttt{Sense} \to \texttt{Sense}} S_1 
\xrightarrow{\texttt{Sense} \to \texttt{Act}} S_2 
\xrightarrow{\texttt{Act} \to \texttt{Act}} \cdots 
\xrightarrow{\texttt{Act} \to \texttt{Reset}} S_j 
\xrightarrow{\texttt{Reset} \to \texttt{Reset}} S_{j+1} 
\xrightarrow{\texttt{Reset} \to \texttt{End}} S_n$,
where the self-loop on \texttt{Act} may iterate multiple times 
until all events are processed.

\paragraph{Configuration constraints.}
The system's structural constraints are given by the following configuration constraints:
\begin{align*}
\Gamma_1 &\equiv \forall c :: c \text{ in } \code{All}_{\code{Controller}}.\; c.\code{leftSensors} \mathbin{!!} c.\code{rightSensors} \\
\Gamma_2 &\equiv \forall c_1 :: c_1 \text{ in } \code{All}_{\code{Controller}}.\; \forall c_2 :: c_2 \text{ in } \code{All}_{\code{Controller}}.\; c_1 \neq c_2 \Rightarrow \\
& \quad (c_1.\code{leftSensors} \cup c_1.\code{rightSensors}) \mathbin{!!} (c_2.\code{leftSensors} \cup c_2.\code{rightSensors})\\
\Gamma_3 &\equiv \forall c :: c \text{ in } \code{All}_{\code{Controller}}.\; |c.\code{leftSensors}| \ge 1 \\
\Gamma_4 &\equiv \forall c :: c \text{ in } \code{All}_{\code{Controller}}.\; |c.\code{rightSensors}| \ge 1 \\
\Gamma_5 &\equiv \forall c :: c \text{ in } \code{All}_{\code{Controller}}.\; c.\code{leftSensors} \cup c.\code{rightSensors} = \code{allSensors} 
\end{align*}
We denote with $\Gamma$ the set $\{\Gamma_1, \Gamma_2, \Gamma_3, \Gamma_4, \Gamma_5\}$.
These constraints ensure every controller has at least one sensor on each side, 
that left and right sensor sets are disjoint within each controller, 
and that no sensor is shared between different controllers.
\paragraph{Global Safety Property.}
The system must satisfy the following safety property:
\[
\mathit{Prop} \equiv \forall c :: c \text{ in } \code{All}_{\code{Controller}}.\; ((\exists s :: s \text{ in } c.\code{leftSensors}.\; s.\code{obstacle}) \Rightarrow c.\code{direction} \neq \code{Left})
\]
This property states that if any left sensor has detected an obstacle, the controller is not moving left.
In our semantics, this property must hold in all reachable states at the end of each scheduler cycle.
In fact, the property can also be violated during intermediate phases within cycles.

\subsection{Configuration Example}
A simple configuration \(\mathcal{C}\) for the example consists of:
\begin{itemize}
\item One controller instance: \(\mathcal{U}_{\text{Controller}} = \{c_1\}\)
\item Three sensor instances: \(\mathcal{U}_{\text{Sensor}} = \{s_L, s_{R1}, s_{R2}\}\)
\item Set field interpretations: \(c_1.\text{leftSensors} = \{s_L\}\), \(c_1.\text{rightSensors} = \{s_{R1}, s_{R2}\}\), 
and \(c_1.\text{allSensors} = \{s_L, s_{R1}, s_{R2}\}\)
\end{itemize}
This configuration clearly satisfies the configuration constraints \(\Gamma\).

\subsection{Execution Scenario}
We illustrate a sample execution scenario over the above configuration \(\mathcal{C}\).
Initially, the controller \(c_1\) is in the \texttt{Idle} state, 
its direction is set to \texttt{Stop}, and \texttt{executed} is \texttt{false}. 
All sensors, namely \(s_L\), \(s_{R1}\), and \(s_{R2}\), are in the \texttt{Ready} state, 
each with \texttt{executed} set to \texttt{false}, and their \texttt{processed} event is false.
The initial scheduler phase is \texttt{Sense}.

For the external inputs, suppose that the left sensor detects an obstacle, i.e., 
\(s_L.\text{obstacle} = \text{true}\), while both right sensors detect no obstacle, 
so \(s_{R1}.\text{obstacle} = \text{false}\) and \(s_{R2}.\text{obstacle} = \text{false}\).

The execution proceeds in scheduler transitions. 
The first transition is the self-transition of the \texttt{Sense} phase, 
which execute the local transitions of all instances in the \texttt{Sense} phase.
In particular, we have:
\begin{itemize}
\item \emph{Sensors:} 
  \begin{itemize}
  \item $s_L$ detects an obstacle ($s_L.\text{obstacle} = \text{true}$), 
  so it executes \texttt{senseNoGo} transition: \texttt{Ready} $\to$ \texttt{NoGo}
  \item $s_{R1}$ detects no obstacle ($s_{R1}.\text{obstacle} = \text{false}$), 
  so it executes \texttt{senseGo} transition: \texttt{Ready} $\to$ \texttt{Go}
  \item $s_{R2}$ detects no obstacle ($s_{R2}.\text{obstacle} = \text{false}$), 
  so it executes \texttt{senseGo} transition: \texttt{Ready} $\to$ \texttt{Go}
  \end{itemize}
\item \emph{Controller:} Stutters (no transitions in \texttt{Sense} phase)
\end{itemize}

Moreover, each instance's \texttt{executed} flag is set to \texttt{true}.
Then, the next scheduler transition is from \texttt{Sense} phase to \texttt{Act} phase, 
which updates the scheduler global phase and resets \texttt{executed} to \texttt{false} for all instances.
This is followed by the \texttt{Act} phase self-transition, where again 
all instances execute their local transitions in the \texttt{Act} phase.
For example, we have:
\begin{itemize}
\item \emph{Controller:} The controller $c_1$ is in \texttt{Idle} state. It evaluates its transition guards for 
the local execution:
  \begin{itemize}
  \item \texttt{actRight} guard: $(\exists s \in \{s_L\} : s.\text{location} = \texttt{NoGo}) \land (\forall s \in \{s_{R1}, s_{R2}\} : s.\text{location} = \texttt{Go})$
  \item This evaluates to: $(s_L.\text{location} = \texttt{NoGo}) \land (s_{R1}.\text{location} = \texttt{Go} \land s_{R2}.\text{location} = \texttt{Go})$, 
  which is true in the current state.
  \end{itemize}
\item \emph{Effect:} Controller executes \texttt{actRight}:
  \begin{itemize}
  \item Sets $c_1.\text{direction} := \texttt{Right}$
  \item Moves to $c_1.\text{location} := \texttt{Moving}$
  \item Sets $s_L.\text{processed} := \text{true}$, $s_{R1}.\text{processed} := \text{true}$ and $s_{R2}.\text{processed} := \text{true}$
  \end{itemize}
\item \emph{Sensors:} Check for \texttt{processed} events:
  \begin{itemize}
  \item $s_L$ has $\text{processed} = \text{true}$, executes \texttt{resetNoGo}: \texttt{NoGo} $\to$ \texttt{Ready}
  \item $s_{R1}$ has $\text{processed} = \text{true}$, executes \texttt{resetGo}: \texttt{Go} $\to$ \texttt{Ready}
  \item $s_{R2}$ has $\text{processed} = \text{true}$, executes \texttt{resetGo}: \texttt{Go} $\to$ \texttt{Ready}
  \end{itemize}
\end{itemize}

All sensors are back in \texttt{Ready} state and have their \texttt{processed} event reset to false.
Note that if we had switched the execution order within the \texttt{Act} phase,
the sensors would have stuttered, and the events would have remained true, triggering an additional 
act phase. Instead, we now take the scheduler transition from \texttt{Act} phase to \texttt{Reset} phase,
followed by the \texttt{Reset} phase self-transition; here, only the controller has a transition:
it executes \texttt{reset} transition: \texttt{Moving} $\to$ \texttt{Idle}.

At the end of Cycle 1, the controller has decided to move right and returned to \texttt{Idle} state. 
The direction variable retains the value \texttt{Right}. 
Before starting Cycle 2, the external inputs (sensor obstacle values) may change: 
we might start in a situation where the controller is moving right, and 
right sensors are detecting obstacles, forcing the controller to change direction
at the end of cycle 2.

\subsection{Local Contracts}
We report an example for a valid local contract for the \code{Controller}'s \code{exec} method in the \code{Act} phase. 
Let $\mathit{leftNoGo} \equiv (\exists s \in \code{leftSensors}.\; s.\code{location} = \code{NoGo})$ and 
$\mathit{rightGo} \equiv (\forall s \in \code{rightSensors}.\; s.\code{location} = \code{Go})$. 
A valid postcondition capturing the \code{actRight} case is:
\begin{align*}
T_{\code{Controller}}(\code{Act}) \equiv {} 
  & \big(\code{old}(\code{location}) = \code{Idle} \land 
     \code{old}(\mathit{leftNoGo}) \land \code{old}(\mathit{rightGo})\big) \\
  & \quad \land\big(\code{location} = \code{Moving} \land \code{direction} = \code{Right} \\
  & \qquad \land\; (\forall s \in \code{All}_{\code{Sensor}}.\; s.\code{processed} = 
    (\text{if } s \in \code{allSensors} \text{ then } \code{true} \\
  & \qquad \qquad \text{else } \code{old}(s.\code{processed})))\big) \\
  \lor\; & \ldots \quad \text{(other transitions handled similarly)}
\end{align*}

\section{Automatic Contract Generation}
\label{sec:effect-transformation}


\subsection{Initialization Contract}

The initialization contract $I_C$ for a class $C$ directly reflects
the semantics of the \texttt{init} method.
It is defined as a conjunction of equalities of the form $v = e$, where $v$ is
a field of the class and $e$ is a constant expression.

\subsection{Execute contracts}

Contracts for \texttt{exec} methods are generated in three steps:
(1) symbolic transformation of transition effects,
(2) encoding of individual transitions, and
(3) combination of the latter for the execution contract.

\medskip
\noindent\emph{Step 1: Symbolic transformation of transition effects.}
For each transition effect $\EffectStmt$, we compute a symbolic effect formula
$\text{Effect}_{\EffectStmt}$ that captures the state update performed by executing $\EffectStmt$.
This is done by forward symbolic execution, which computes a symbolic map $M_{\EffectStmt}$
associating each variable with an expression over pre-state values.
Assignments update the corresponding map entry; conditionals yield conditional
expressions; and quantified assignments generate lambda expressions for function
fields.
Method calls occurring in $\EffectStmt$ are inlined by substituting their bodies, so that
the transformation operates on a single expanded statement sequence.

A detailed description of this transformation is provided in
Algorithm~\ref{alg:effect-transform}.
It maintains a \emph{symbolic map} $M$ that tracks the cumulative effect of assignments, handling sequential composition by chaining updates.
Given an effect $E$ of a class $C$, the transformation initializes $M_0(v) = v_{\text{old}}$ for each of the 
variables $v$ that are fields of $C$ and are assigned to in the effect, 
and $M_0(f) = \lambda y. w_{\text{old}}(y)$ for each variable $w$ 
that is updated by a quantified assignment in the effect, 
then computes the final map $M_E = \textsc{Transform}(E, M_0)$.
The \textsc{Subst} function substitutes all variable occurrences using the current symbolic map.

\begin{algorithm}[h]
\caption{Effect Transformation: \textsc{Transform}(Statement $S$, Map $M$)}
\label{alg:effect-transform}
\begin{algorithmic}[1]
\STATE \textbf{Input:} Statement $S$, Symbolic map $M$
\STATE \textbf{Output:} Updated symbolic map $M'$
\STATE
\IF{$S$ is $v := e$}
    \STATE $M'(v) \gets \textsc{Subst}(e, M)$
    \STATE $M'(w) \gets M(w)$ for all $w \neq v$
\ELSIF{$S$ is $v := *$}
    \STATE $M'(v) \gets \varepsilon$ \hfill $\triangleright$ havoc: no constraint on $v$
    \STATE $M'(w) \gets M(w)$ for all $w \neq v$
\ELSIF{$S$ is \textbf{if} $b$ \textbf{then} $S_1$ \textbf{else} $S_2$}
    \STATE $M_1 \gets \textsc{Transform}(S_1, M)$
    \STATE $M_2 \gets \textsc{Transform}(S_2, M)$
    \FOR{each variable $v$}
        \STATE $M'(v) \gets \textbf{if } \textsc{Subst}(b, M) \textbf{ then } M_1(v) \textbf{ else } M_2(v)$
    \ENDFOR
\ELSIF{$S$ is $\forall x \in E_{set} \{ x.f := e \}$ (quantified assignment)}
    \STATE $M'(f) \gets \lambda y. \textbf{if } y \in E_{set} \textbf{ then } \textsc{Subst}(e[x \mapsto y], M) \textbf{ else } M(f)(y)$
    \STATE $M'(w) \gets M(w)$ for all $w \neq f$
\ELSIF{$S$ is $S_1; S_2$}
    \STATE $M_{temp} \gets \textsc{Transform}(S_1, M)$
    \STATE $M' \gets \textsc{Transform}(S_2, M_{temp})$
\ENDIF
\RETURN $M'$
\end{algorithmic}
\end{algorithm}

The resulting effect formula is defined as:
\begin{align*}
	\text{Effect}_{\EffectStmt} \equiv\;&
\bigwedge_{v \in \text{modified(self)}} v = M_{\EffectStmt}(v) \\
&\land
\bigwedge_{w \in \text{modified(other)}}
\forall x :: x \in \texttt{All}.\; x.w = M_{\EffectStmt}(w),
\end{align*}
where $M_{\EffectStmt}(v)$ and $M_{\EffectStmt}(w)$ are the symbolic expressions computed for fields of the
current object and for fields of other objects accessed via set-valued references,
respectively.

\begin{remark}[Strongest Postcondition]
Existentially quantifying the pre-state variables in $\text{Effect}_{\EffectStmt}$ yields a
formula equivalent to the strongest postcondition of the effect $\EffectStmt$.
\end{remark}

\medskip
\noindent\emph{Step 2: Encoding of transitions.}
Each class transition
$t = (l_{\text{start}}, G, l_{\text{end}}, \EffectStmt, p)$
is encoded as a transition formula $\text{Trans}_t$:
\begin{align*}
\text{Trans}_t \equiv\;&
\texttt{old}(\text{location} = l_{\text{start}} \land \widehat{G}_t) \\
&\land \texttt{Effect}_{\EffectStmt} \land \text{location} = l_{\text{end}} \\
&\land \texttt{executed} \\
&\land \texttt{ResetEvents}(G) \land \texttt{Unchanged}.
\end{align*}

The predicate $\widehat{G}_t$ is the \emph{extended guard}, which enforces transition
priority:
\[
\widehat{G}_t \equiv
G \land
\bigwedge_{t' \prec t,\; t'.\text{start} = l_{\text{start}},\; t'.\text{end} = l_{\text{end}}}
\neg G_{t'}.
\]
Here $t' \prec t$ indicates that transition $t'$ is declared before $t$ in the class
definition.
Since transitions are evaluated in declaration order, $t$ may fire only if its
own guard holds and all higher-priority guards with the same source and target
locations are false. The remaining components of $\text{Trans}_t$ serve auxiliary purposes:
\begin{itemize}
\item $\texttt{ResetEvents}(G)$ resets to \texttt{false} any event fields referenced in
the guard $G$; 
\item $\texttt{Unchanged}$ is an expression asserting that all variables not modified
by $E$ retain their pre-state values.
\end{itemize}

\medskip
\noindent\emph{Step 3: Execution Contracts.}
For a given phase $p$, the execution contract $\mathit{Exec}_C^p$ for class $C$ is
defined as the disjunction of all transition formulas associated with phase $p$,
augmented with a stuttering case:
\[
\mathit{Exec}_C^p \equiv
\bigvee_{\substack{t \in \text{transitions}(C) \\ t.\text{phase} = p}} \text{Trans}_t
\;\lor\;
\text{Stutter}.
\]
The predicate $\text{Stutter}$ asserts that all state variables remain unchanged and
applies when no transition in phase $p$ is enabled, i.e., when all corresponding
guards are false.

\subsection{Tick Timer Contracts}
The tick-timer contract $K_C$ for a class $C$ is also generated
automatically.
Its effect consists of decrementing all active timers declared in the class by one and updating the
state of any timer that reaches zero (e.g., by marking it as \texttt{inactive}).
Also this expression can be derived trivially from the class definition.

\input{proof_theorem.tex}

%% file: proof_theorem.tex
\section{Proof of Theorem 1}
\label{sec:proof-theorem}

Here we provide a proof of Theorem 1 (Compositional Soundness), 
clarifying the connection to the local contracts and global entailment checks 
in Section~\ref{subsec:compositional-verification}.

\medskip
\noindent\textbf{Theorem 1 (Compositional Soundness).}
Given a configurable SRA system $\mathcal{S}$, a set of configuration constraints $\Gamma$, 
and a global property $\varphi$, if all local contracts are valid 
and all global entailment checks in Section~\ref{subsubsec:global-checks} succeed for an invariant $\mathit{Inv}$, 
then $\varphi$ holds at the end of every 
scheduling cycle in every run of~$\mathcal{S}$, 
for all configurations satisfying~$\Gamma$.
\begin{proof}
Fix an arbitrary configuration $\mathcal{C}$ such that $\mathcal{C} \models \Gamma$.
We first prove that the invariant $\mathit{Inv}$ holds in all reachable global states.
We then use this fact to establish the safety property $\varphi$.

\medskip
\noindent\emph{Invariant preservation.}
We proceed by induction on the length of a run, showing that for every reachable
global state $S$, we have $\mathcal{C}, S \models \mathit{Inv}$.

\smallskip
\noindent\textbf{Base case.}
Let $S_0$ be the initial global state produced by system initialization.
For each instance $o$ in $\mathcal{C}$, the local state of $o$ in $S_0$ is a valid
post-state of the \texttt{init} method.
By validity of the initialization contracts, $I_C$ holds for all instances in $S_0$.

Moreover, by construction,
$S_0(\texttt{phase}) = l_0$ and all instances satisfy
$\lnot\texttt{executed}$.
Hence $S_0$ satisfies the antecedent of the initialization check
(``Initialization establishes $\mathit{Inv}$'').
By the validity of this check, we conclude
$\mathcal{C}, S_0 \models \mathit{Inv}$.

\smallskip
\noindent\textbf{Inductive step.}
Assume that $\mathcal{C}, S \models \mathit{Inv}$ for some reachable global state $S$.
Let $S'$ be any state reachable from $S$ by one scheduler transition.
We distinguish cases according to the form of the scheduler transition.

\paragraph{Case 1: Scheduler self-loop.}
Suppose $S(\texttt{phase}) = p$ and the scheduler takes the self-loop $p \to p$.
This corresponds to an arbitrary interleaving of \texttt{exec}(p) invocations.

We model this execution as a sequence
\[
S = \sigma_0 \to \sigma_1 \to \cdots \to \sigma_n = S',
\]
where each step $\sigma_i \to \sigma_{i+1}$ corresponds to executing
\texttt{exec}(p) on a single instance $o_i$.

We prove by induction on $i$ that for all $0 \leq i \leq n$:
\begin{enumerate}
  \item $\mathcal{C}, \sigma_i \models \mathit{Inv}$, and
  \item for every instance $c$ that has not yet executed in phase $p$ at $\sigma_i$,
        $\mathcal{C}, \sigma_i \models c.G'$.
\end{enumerate}

\emph{Base case ($i=0$).}
By the outer induction hypothesis, $\mathcal{C}, S \models \mathit{Inv}$.
Since the self-loop is taken, $S$ also satisfies the scheduler guard $g_{p \to p}$.
By the self-loop preservation check, condition~(a) (establishment of $G'$), we obtain
$\mathcal{C}, \sigma_0 \models c.G'$ for all instances $c$.
Thus both claims hold for $\sigma_0$.

\emph{Inductive step.}
Assume the claims hold for $\sigma_i$, with $i < n$.
The state $\sigma_{i+1}$ is obtained by executing \texttt{exec}(p) on instance $o_i$
and setting $o_i.\texttt{executed}$ to $\texttt{true}$.

By validity of the local execution contract, \texttt{exec}(p) satisfies $T_C(p)$.
Since the antecedent of the self-loop main preservation entailment holds at $\sigma_i$,
its validity yields
$\mathcal{C}, \sigma_{i+1} \models \mathit{Inv}$.

Moreover, for any instance $c \neq o_i$ that has not yet executed in phase $p$,
the self-loop preservation check, condition~(b) (stability) applies,
with $d$ instantiated as $o_i$.
Thus $\mathcal{C}, \sigma_{i+1} \models c.G'$ holds for all such instances.

This completes the inner induction, and therefore
$\mathcal{C}, S' \models \mathit{Inv}$ holds after the self-loop.
\paragraph{Case 2: Scheduler phase change.}
Suppose the scheduler transitions from phase $p$ to phase $p'$ with guard
$g_{p \to p'}$.
By the outer induction hypothesis, we have
\[
\mathcal{C}, S \models
\Gamma \land \mathit{Inv} \land
\texttt{phase} = p \land g_{p \to p'}.
\]

We distinguish cases depending on the target location.

\medskip
\noindent\emph{Case 2a: Non-final transition ($p' \neq l_f$).}
In this case, $S'$ is obtained by updating the scheduler location to $p'$
and resetting all \texttt{executed} flags to \texttt{false}.
Thus the antecedent of the phase-transition non-final entailment is satisfied.
By its validity, we conclude
\[
\mathcal{C}, S' \models \mathit{Inv}.
\]

\medskip
\noindent\emph{Case 2b: Final transition ($p' = l_f$).}
In this case, the scheduler transitions to the final location and additionally
resets all \texttt{executed} flags to \texttt{false} and invokes \texttt{tick}
on all instances.
By validity of the local tick contracts, $K_C$ holds for all instances.
Thus the antecedent of the phase-transition final entailment holds in $S'$.
By its validity, we conclude
\[
\mathcal{C}, S' \models \mathit{Inv}.
\]

\medskip
\noindent\emph{Case 2c: Reset transition ($l_f \to l_0$).}
Finally, suppose the scheduler transitions implicitly from the final location
$l_f$ back to the initial location $l_0$.
By the outer induction hypothesis, we have
\[
\mathcal{C}, S \models
\Gamma \land \mathit{Inv} \land \texttt{phase} = l_f.
\]

By construction of the reset transition, external input variables may change,
while all non-input fields remain unchanged.
Thus $S'$ satisfies the predicate $\text{InputChange}$ and
$\texttt{phase} = l_0$.
The antecedent of the phase-transition reset entailment therefore holds.
By its validity, we conclude
\[
\mathcal{C}, S' \models \mathit{Inv}.
\]

This completes the inductive proof that $\mathit{Inv}$ holds in all reachable states.

\medskip
\noindent\emph{Establishing the safety property.}
Let $S_i$ be a global state at the end of a scheduling cycle, i.e.,
$S_i(\texttt{phase}) = l_f$.
By invariant preservation, $\mathcal{C}, S_i \models \mathit{Inv}$.
Moreover, $S_i$ is obtained by a final scheduler transition.
By the validity of the safety-implication check
(``$\mathit{Inv}$ implies the safety property''), we conclude
$\mathcal{C}, S_i \models \varphi$.

Since the configuration $\mathcal{C}$ and the run were arbitrary, the claim follows.
\end{proof}

%% file: appendix-experiments.tex
\section{Detailed Experiment Results}
\label{sec:appendix-experiments}

For the Railway Protection System,
Table~\ref{tab:app-pi-local-contracts} shows the full verification statistics 
for local contracts, and Table~\ref{tab:app-pi-props} the statistics 
for safety properties.
In Table~\ref{tab:app-pi-props}, properties 1--21 are the properties 
verified in~\cite{cav2025}, while properties 22--25 are new.

For the Signal Control Logic,
Table~\ref{tab:app-sa-local-contracts} shows the full verification statistics 
for local contracts.

\setlength{\tabcolsep}{6pt}
\begin{table}[t]
    \caption{Verification statistics for local contracts of the RPS}
    \centering
    \resizebox{0.85\textwidth}{!}{ 
    \begin{tabular}{c|c|c|rr|rr|rr|rr}
        \toprule
        Version & Class & \# Ass. & 
        Tot. time & Tot. RC & 
        Exec. time & Exec. RC & 
        Guard time & Guard RC & 
        Eff. time & Eff. RC \\
        \hline
    \multirow{7}{4em}{\centering List} 
     & 1 & 88 & 1380s & 2710.11M & 7s & 22.85M & 1370s & 2678.98M & 1s & 3.57M \\ 
     & 2 & 102 & 1069s & 2154.83M & 6s & 20.47M & 1059s & 2123.46M & 2s & 5.22M \\ 
     & 3 & 155 & 1036s & 2229.82M & 11s & 33.91M & 1020s & 2181.6M & 2s & 6.83M \\ 
     & 4 & 73 & 3s & 7.94M & 1s & 2.0M & 0s & 0.46M & 1s & 2.5M \\ 
     & 5 & 31 & 1s & 1.13M & 0s & 0.13M & 0s & 0.01M & 0s & 0.17M \\ 
     & 6 & 128 & 14s & 48.75M & 3s & 11.71M & 0s & 0.5M & 9s & 31.99M \\ 
     & 7 & 415 & 730s & 2283.95M & 322s & 633.21M & 15s & 49.47M & 381s & 1567.6M \\ 
     & Sum & 994 & 4233s & 9436.56M & 350s & 724.29M & 3465s & 7034.48M & 396s & 1617.89M \\
        \hline
    \multirow{7}{4em}{\centering List+QG} 
     & 1 & 88 & 26s & 78.58M & 7s & 20.68M & 16s & 50.87M & 1s & 3.38M \\ 
     & 2 & 102 & 22s & 72.42M & 6s & 18.95M & 13s & 44.03M & 2s & 5.02M \\ 
     & 3 & 155 & 29s & 84.4M & 10s & 28.54M & 15s & 43.23M & 2s & 6.62M \\ 
     & 4 & 73 & 3s & 6.6M & 0s & 1.65M & 0s & 0.2M & 1s & 2.4M \\ 
     & 5 & 31 & 1s & 1.13M & 0s & 0.13M & 0s & 0.01M & 0s & 0.17M \\ 
     & 6 & 128 & 13s & 48.75M & 4s & 11.71M & 0s & 0.5M & 8s & 31.99M \\ 
     & 7 & 419 & 831s & 2237.63M & 377s & 590.09M & 16s & 49.57M & 427s & 1567.6M \\ 
     & Sum & 998 & 925s & 2529.53M & 403s & 671.75M & 60s & 188.41M & 442s & 1617.18M \\ 
       \hline
    \multirow{7}{4em}{\centering Set} 
     & 1 & 84 & 44s & 117.62M & 39s & 105.98M & 1s & 3.59M & 1s & 3.4M \\ 
     & 2 & 98 & 43s & 113.23M & 39s & 98.74M & 1s & 3.85M & 2s & 5.04M \\ 
     & 3 & 148 & 59s & 144.85M & 52s & 125.79M & 1s & 5.0M & 2s & 6.62M \\ 
     & 4 & 69 & 3s & 7.53M & 1s & 1.91M & 0s & 0.29M & 1s & 2.41M \\ 
     & 5 & 29 & 1s & 1.12M & 0s & 0.14M & 0s & 0.0M & 0s & 0.17M \\ 
     & 6 & 126 & 13s & 47.73M & 3s & 12.06M & 0s & 0.45M & 8s & 30.73M \\ 
     & 7 & 407 & 565s & 1721.96M & 211s & 330.57M & 3s & 8.2M & 342s & 1353.57M \\ 
     & Sum & 963 & 729s & 2154.06M & 345s & 675.18M & 7s & 21.37M & 356s & 1401.94M \\
       \hline
    \multirow{7}{4em}{\centering Set+QG}
     & 1 & 84 & 5s & 13.87M & 2s & 5.64M & 1s & 1.3M & 1s & 3.34M \\ 
     & 2 & 98 & 6s & 17.91M & 2s & 7.11M & 1s & 1.49M & 2s & 4.98M \\ 
     & 3 & 148 & 9s & 26.83M & 4s & 12.38M & 1s & 1.89M & 2s & 6.54M \\ 
     & 4 & 69 & 2s & 6.34M & 0s & 1.49M & 0s & 0.18M & 1s & 2.36M \\ 
     & 5 & 29 & 1s & 1.12M & 0s & 0.14M & 0s & 0.0M & 0s & 0.17M \\ 
     & 6 & 126 & 13s & 47.73M & 3s & 12.06M & 0s & 0.45M & 8s & 30.73M \\ 
     & 7 & 407 & 584s & 1738.14M & 223s & 349.33M & 3s & 8.4M & 348s & 1353.57M \\ 
     & Sum & 963 & 620s & 1851.96M & 234s & 388.14M & 5s & 13.7M & 362s & 1401.69M \\  
        \bottomrule
    \end{tabular}
    }
    \label{tab:app-pi-local-contracts}
\end{table}

\begin{table}[t]
    \caption{Verification statistics for safety properties of the RPS}
    \centering
    \resizebox{0.85\textwidth}{!}{ 
    \begin{tabular}{c|rr|rr|rr|rr|rr|rr|rr|rr}
        \toprule
        \multirow{3}{4em}{\centering Prop.} & 
        \multicolumn{4}{c|}{\textbf{List}} & 
        \multicolumn{4}{c|}{\textbf{List+QG}} & 
        \multicolumn{4}{c|}{\textbf{Set}} & 
        \multicolumn{4}{c}{\textbf{Set+QG}} \\
        &
        \multicolumn{2}{c}{Trans.} & \multicolumn{2}{c|}{Exec.} &
        \multicolumn{2}{c}{Trans.} & \multicolumn{2}{c|}{Exec.} &
        \multicolumn{2}{c}{Trans.} & \multicolumn{2}{c|}{Exec.} &
        \multicolumn{2}{c}{Trans.} & \multicolumn{2}{c}{Exec.} \\
        & 
        \multicolumn{1}{c}{Time} & \multicolumn{1}{c|}{RC} & 
        \multicolumn{1}{c}{Time} & \multicolumn{1}{c|}{RC} & 
        \multicolumn{1}{c}{Time} & \multicolumn{1}{c|}{RC} & 
        \multicolumn{1}{c}{Time} & \multicolumn{1}{c|}{RC} &
        \multicolumn{1}{c}{Time} & \multicolumn{1}{c|}{RC} & 
        \multicolumn{1}{c}{Time} & \multicolumn{1}{c|}{RC} & 
        \multicolumn{1}{c}{Time} & \multicolumn{1}{c|}{RC} & 
        \multicolumn{1}{c}{Time} & \multicolumn{1}{c}{RC} \\
        \hline
        1  &  475s  &  1458M  &  15s  &  44M  &  356s  &  856M  &  18s  &  40M  &  388s  &  991M  &  24s  &  59M  &  195s  &  480M  &  18s  &  40M  \\
        2  &  576s  &  1399M  &  18s  &  47M  &  333s  &  792M  &  17s  &  38M  &  379s  &  959M  &  24s  &  58M  &  215s  &  555M  &  17s  &  36M  \\
        3  &  1641s  &  4394M  &  80s  &  200M  &  161s  &  383M  &  61s  &  143M  &  1412s  &  4078M  &  91s  &  215M  &  144s  &  381M  &  59s  &  126M  \\
        4  &  579s  &  1368M  &  20s  &  58M  &  370s  &  863M  &  14s  &  28M  &  357s  &  888M  &  22s  &  52M  &  237s  &  528M  &  15s  &  32M  \\
        5  &  564s  &  1349M  &  19s  &  56M  &  351s  &  788M  &  13s  &  26M  &  381s  &  881M  &  20s  &  48M  &  205s  &  473M  &  15s  &  33M  \\
        6  &  596s  &  1412M  &  17s  &  43M  &  361s  &  827M  &  15s  &  29M  &  402s  &  931M  &  23s  &  56M  &  232s  &  511M  &  15s  &  31M  \\
        7  &  568s  &  1346M  &  21s  &  61M  &  347s  &  854M  &  11s  &  27M  &  391s  &  866M  &  23s  &  59M  &  238s  &  539M  &  18s  &  46M  \\
        8  &  567s  &  1348M  &  17s  &  45M  &  341s  &  851M  &  13s  &  32M  &  428s  &  1018M  &  21s  &  53M  &  213s  &  480M  &  16s  &  35M  \\
        9  &  562s  &  1315M  &  18s  &  50M  &  293s  &  728M  &  12s  &  29M  &  382s  &  886M  &  20s  &  48M  &  236s  &  524M  &  20s  &  49M  \\
        10  &  425s  &  1030M  &  20s  &  48M  &  312s  &  794M  &  11s  &  33M  &  308s  &  755M  &  21s  &  52M  &  164s  &  398M  &  20s  &  42M  \\
        11  &  403s  &  1004M  &  18s  &  50M  &  250s  &  650M  &  12s  &  29M  &  288s  &  722M  &  20s  &  49M  &  136s  &  340M  &  16s  &  34M  \\
        12  &  398s  &  987M  &  17s  &  45M  &  308s  &  796M  &  13s  &  33M  &  300s  &  750M  &  21s  &  49M  &  133s  &  323M  &  16s  &  35M  \\
        13  &  606s  &  1498M  &  18s  &  45M  &  303s  &  770M  &  16s  &  43M  &  332s  &  856M  &  24s  &  59M  &  184s  &  448M  &  20s  &  43M  \\
        14  &  550s  &  1260M  &  19s  &  49M  &  321s  &  819M  &  15s  &  39M  &  320s  &  798M  &  25s  &  60M  &  210s  &  523M  &  17s  &  37M  \\
        15  &  2224s  &  5580M  &  98s  &  233M  &  161s  &  405M  &  42s  &  106M  &  1592s  &  4490M  &  90s  &  209M  &  135s  &  348M  &  43s  &  96M  \\
        16  &  668s  &  1554M  &  18s  &  46M  &  305s  &  769M  &  16s  &  41M  &  331s  &  834M  &  22s  &  51M  &  177s  &  448M  &  16s  &  43M  \\
        17  &  563s  &  1276M  &  19s  &  48M  &  327s  &  818M  &  15s  &  38M  &  292s  &  774M  &  32s  &  75M  &  208s  &  523M  &  14s  &  37M  \\
        18  &  1750s  &  4534M  &  116s  &  277M  &  148s  &  402M  &  72s  &  188M  &  1703s  &  4460M  &  92s  &  226M  &  133s  &  343M  &  43s  &  115M  \\
        19  &  580s  &  1362M  &  18s  &  45M  &  270s  &  743M  &  17s  &  39M  &  305s  &  813M  &  24s  &  60M  &  208s  &  532M  &  13s  &  37M  \\
        20  &  502s  &  1406M  &  20s  &  44M  &  265s  &  723M  &  17s  &  37M  &  314s  &  816M  &  23s  &  57M  &  188s  &  472M  &  13s  &  38M  \\
        21  &  1472s  &  4386M  &  101s  &  249M  &  143s  &  379M  &  54s  &  117M  &  1851s  &  4866M  &  84s  &  196M  &  147s  &  383M  &  50s  &  134M  \\
        22  &  108s  &  319M  &  13s  &  29M  &  43s  &  124M  &  9s  &  16M  &  110s  &  334M  &  13s  &  29M  &  39s  &  127M  &  6s  &  16M  \\
        23  &  110s  &  319M  &  11s  &  29M  &  43s  &  124M  &  9s  &  16M  &  108s  &  334M  &  13s  &  30M  &  40s  &  127M  &  6s  &  18M  \\
        24  &  108s  &  319M  &  11s  &  29M  &  42s  &  124M  &  10s  &  16M  &  109s  &  334M  &  14s  &  31M  &  39s  &  127M  &  6s  &  16M  \\
        25  &  111s  &  319M  &  12s  &  29M  &  42s  &  124M  &  9s  &  16M  &  107s  &  334M  &  12s  &  28M  &  41s  &  127M  &  7s  &  18M  \\
     Sum  &  16708s  &  42544M  &  755s  &  1899M  &  6195s  &  15508M  &  514s  &  1195M  &  12888s  &  33769M  &  797s  &  1908M  &  4096s  &  10064M  &  497s  &  1187M  \\
     Max  &  2224s  &  5580M  &  116s  &  277M  &  370s  &  863M  &  72s  &  188M  &  1851s  &  4866M  &  92s  &  226M  &  238s  &  555M  &  59s  &  134M  \\

        \bottomrule
    \end{tabular}
    }
    \label{tab:app-pi-props}
\end{table}

\begin{table}[t]
    \caption{Verification statistics summary for local contracts of the SCL}
    \centering
    \resizebox{0.85\textwidth}{!}{ 
    \begin{tabular}{c|c|c|rr|rr|rr|rr}
        \toprule
        Version & Class & \# Ass. & 
        Tot. time & Tot. RC & 
        Exec. time & Exec. RC & 
        Guard time & Guard RC & 
        Eff. time & Eff. RC \\
        \hline
    \multirow{4}{4em}{\centering List}
     & 1 & 32 & 7s & 24.51M & 0s & 1.11M & 0s & 0.03M & 6s & 22.02M \\ 
     & 2 & 196 & 305s & 1030.73M & 22s & 73.94M & 1s & 1.32M & 279s & 946.44M \\ 
     & 3 & 40 & 7s & 28.08M & 0s & 0.82M & 0s & 0.25M & 6s & 25.46M \\ 
     & 4 & 135 & 682s & 2730.86M & 74s & 270.51M & 0s & 0.8M & 605s & 2452.96M \\ 
     & Sum & 408 & 1001s & 3814.77M & 97s & 346.38M & 1s & 2.4M & 895s & 3446.88M \\ 
        \hline
    \multirow{4}{4em}{\centering List+QG}
     & 1 & 32 & 7s & 24.96M & 0s & 1.39M & 0s & 0.03M & 6s & 22.25M \\ 
     & 2 & 198 & 305s & 1022.16M & 24s & 67.43M & 1s & 0.92M & 277s & 944.87M \\ 
     & 3 & 40 & 7s & 27.75M & 0s & 0.77M & 0s & 0.1M & 6s & 25.46M \\ 
     & 4 & 136 & 605s & 2453.57M & 6s & 20.78M & 0s & 0.56M & 597s & 2426.29M \\ 
     & Sum & 411 & 924s & 3528.9M & 31s & 90.37M & 1s & 1.6M & 886s & 3418.88M \\ 
        \hline
    \multirow{4}{4em}{\centering Set}
     & 1 & 30 & 6s & 23.67M & 0s & 0.92M & 0s & 0.02M & 5s & 21.41M \\ 
     & 2 & 193 & 298s & 1030.77M & 19s & 75.98M & 1s & 1.04M & 275s & 944.86M \\ 
     & 3 & 38 & 7s & 27.95M & 0s & 0.81M & 0s & 0.15M & 6s & 25.46M \\ 
     & 4 & 133 & 658s & 2680.7M & 62s & 239.7M & 0s & 0.7M & 593s & 2434.0M \\ 
     & Sum & 399 & 969s & 3763.57M & 82s & 317.41M & 1s & 1.91M & 880s & 3425.74M \\ 
        \hline
    \multirow{4}{4em}{\centering Set+QG}
     & 1 & 30 & 6s & 23.65M & 0s & 0.97M & 0s & 0.02M & 5s & 21.41M \\ 
     & 2 & 193 & 295s & 1016.75M & 17s & 62.5M & 1s & 0.87M & 275s & 944.59M \\ 
     & 3 & 38 & 7s & 27.73M & 0s & 0.77M & 0s & 0.08M & 6s & 25.46M \\ 
     & 4 & 133 & 601s & 2450.14M & 5s & 16.2M & 0s & 0.53M & 594s & 2427.32M \\ 
     & Sum & 399 & 910s & 3518.73M & 22s & 80.44M & 1s & 1.5M & 880s & 3418.78M \\ 
        \bottomrule
    \end{tabular}
    }
    \label{tab:app-sa-local-contracts}
\end{table}